\definecolor{darkblue}{rgb}{.1,.1,.6}
\newtheorem{definition}{Definition}
\newtheorem{proposition}{Proposition}
\newtheorem{theorem}{Theorem}
\begin{document}

\title{Statistical Analysis of Self-Organizing Networks with \mbox{Biased} Cell
Association and Interference Avoidance}
\author{ {Carlos H. M. de Lima,~\IEEEmembership{Member, IEEE}, Mehdi
Bennis,~\IEEEmembership{Member, IEEE}, and Matti
Latva-aho,~\IEEEmembership{Senior Member, IEEE}\\ Email: \{carlosl, bennis,
matla\}@ee.oulu.fi}
\thanks{Authors are with the Centre for Wireless Communications (CWC),
University of Oulu, Finland.}
\thanks{Authors would like to thank the Finnish funding agency for technology
and innovation (Tekes), Elektrobit, Renesas Mobile, and Nokia Siemens Networks
for supporting this work.
This work has been also conducted in the framework of the ICT project
ICT-$4$-$248523$ BeFEMTO, which is partly funded by the EU.}}

\maketitle

\begin{acronym}[mmmmm]
	\acro{3GPP}[$3$GPP]{$3^\mathrm{rd}$ Generation Partnership Project}
	\acro{ABS}{Almost Blank Sub-frame}
	\acro{ADSL}{Asymmetric Digital Subscriber Line}
	\acro{DSL}{Digital Subscriber Line}
	\acro{ALBA-R}{Adaptive Load-Balanced Algorithm Rainbow}
	\acro{ALBA}{Adaptive Load-Balanced Algorithm}
	\acro{ALOHA}[ALOHA]{}
	\acro{APDL}{Average Packet Delivery Latency}
	\acro{AP}{Access Point}
	\acro{ASE}{Average Spectral Efficiency}
	\acro{BAP}{Blocked Access Protocol}
	\acro{BB}{Busy Burst}
	\acro{BC}{Broadcast Channel}
	\acro{BPP}{Binomial Point Process}
	\acro{BS}{Base Station}
	\acro{CAA}{Channel Access Algorithm}
	\acro{CAPEX}{Capital Expenditure}
	\acro{CAP}{Channel Access Protocol}
	\acro{CCDF}{Complementary Cumulative Distribution Function}
	\acro{CCI}{Co-Channel Interference}
	\acro{CDF}{Cumulative Distribution Function}
	\acro{CDMA}{Code Division Multiple Access}
	\acro{CDR}{Convex Lenses Decision Region}
	\acro{CF}{Characteristic Function}
	\acro{CGF}{Contention-based Geographic Forwarding}
	\acro{CM}{Coordination Mechanism}
	\acro{COMP}{Coordinated Multi-Point}
	\acro{CPICH}{Common Pilot Channel}
	\acro{CRA}{Conflict Resolution Algorithm}
	\acro{CRD}{Contention Resolution Delay}
	\acro{CRI}{Contention Resolution Interval}
	\acro{CRP}{Contention Resolution Protocol}
	\acro{CRS}{Cell-specific Reference Signal}
	\acro{CR}{Contention Resolution}
	\acro{CSMA/CA}{Carrier Sense Multiple Access with Collision Avoidance}
	\acro{CSMA/CD}{Carrier Sense Multiple Access with Collision Detection}
	\acro{CSMA}{Carrier Sense Multiple Access}
	\acro{CTM}{Capetanakis-Tsybakov-Mikhailov}
	\acro{CTS}{Clear To Send}
	\acro{D2D}{Device-To-Device}
	\acro{DAS}{Distributed Antenna System}
	\acro{DCF}{Distributed Coordination Function}
	\acro{DER}{Dynamic Exclusion Region}
	\acro{DHCP}{Dynamic Host Configuration Protoco}
	\acro{DL}{Downlink}
	\acro{E2E}{End-to-End}
	\acro{EDM}{Euclidean Distance Matrix}
	\acro{ES}{Evaluation Scenario}
	\acro{FAP}{Femto Access Point}
	\acro{FBS}{Femto Base Station}
	\acro{FDD}{Frequency Division Duplexing}
	\acro{FDM}{Frequency Division Multiplexing}
	\acro{FDR}{Forwarding Decision Region}
	\acro{FFR}{Fractional Frequency Reuse}
	\acro{FG}{Frequency Group}
	\acro{FPP}{First Passage Percolation}
	\acro{FUE}{Femto User Equipment}
	\acro{FU}{Femtocell User}
	\acro{GF}{Geographic Forwarding}
	\acro{GLIDER}{Gradient Landmark-Based Distributed Routing}
	\acro{GPSR}{Greedy Perimeter Stateless Routing}
	\acro{GeRaF}{Geographic Random Forwarding}
	\acro{HDR}[HDR]{High Data Rate}
	\acro{HII}{High Interference Indicator}
	\acro{HNB}{Home Node B}
	\acro{HN}[HetNet]{Heteronegeous Network}
	\acro{HOS}{Higher Order Statistics}
	\acro{HUE}{Home User Equipment}
	\acro{IMT}{International Mobile Telecommunications}
	\acro{ITU}{International Telecommunication Union}
	\acro{ICIC}{Inter-Cell Interference Coordination}
	\acro{IEEE}[IEEE]{}
	\acro{IP}{Interference Profile}
	\acro{KPI}{Key Performance Indicators}
	\acro{LN}{Log-Normal}
	\acro{LTE}{Long Term Evolution}
	\acro{LoS}{Line-of-Sight}
	\acro{M2M}{Machine-To-Machine}
	\acro{MACA}{Multiple Access with Collision Avoidance}
	\acro{MAC}{Medium Access Control}
	\acro{MBS}{Macro Base Station}
	\acro{MGF}{Moment Generating Function}
	\acro{MIMO}{Multiple-Input Multiple-Output}
	\acro{MPP}{Marked Point Process}
	\acro{MRC}{Maximum Ratio Combining}
	\acro{MS}{Mobile Station}
	\acro{MUE}{Macro User Equipment}
	\acro{MU}{Macrocell User}
	\acro{NB}{Node B}
	\acro{NLoS}{Non Line-of-Sight}
	\acro{NRT}{Non Real Time}
	\acro{OFDMA}{Orthogonal Frequency Division Multiple Access}
	\acro{OOP}{Object Oriented Programming}
	\acro{OPEX}{Operating Expenditure}
	\acro{OP}{Outage Probability}
	\acro{OS}{Order Statistic}
	\acro{PBS}{Pico Base Station}
	\acro{PC}{Power Control}
	\acro{PCI}{Physical Cell Indicator}
	\acro{PDF}{Probability Density Function}
	\acro{PDSR}{Packet Delivery Success Ratio}
	\acro{PGF}{Probability Generating Function}
	\acro{PMF}{Probability Mass Function}
	\acro{PPP}{Poisson Point Process}
	\acro{PP}{Point Process}
	\acro{PRM}{Poisson Random Measure}
	\acro{QoS}{Quality of Service}
	\acro{RAS}{Random Access System}
	\acro{RAT}{Radio Access Technology}
	\acro{RA}{Random Access}
	\acro{RCA}{Random Channel Access}
	\acro{RD}[R$\&$D]{Research $\&$ Development}
	\acro{REB}{Range Expansion Bias}
	\acro{RE}{Range Expansion}
	\acro{RF}{Radio Frequency}
	\acro{RIBF}{Regularized Incomplete Beta Function}
	\acro{RSSI}{Received Signal Strength Indicator}
	\acro{PSS}{Primary Synchronization Channel}
	\acro{SSS}{Secondary Synchronization Channel}
	\acro{RMA}{Random Multiple-Access}
	\acro{RN}{Relay Node}
	\acro{RNTP}{Relative Narrowband Transmit Power}
	\acro{RRM}{Radio Resource Management}
	\acro{RSA}{Relay Selection Algorithm}
	\acro{RSRP}{Reference Signal Received Power}
	\acro{RSS}{Received Signal Strength}
	\acro{RS}{Relay Selection}
	\acro{RTS}{Request to Send}
	\acro{RT}{Real Time}
	\acro{RV}{Random Variable}
	\acro{SC}{Selection Combining}
	\acro{SDR}{Sectoral Decision Region}
	\acro{SF}{Sub-Frame}
	\acro{SG}{Stochastic Geometry}
	\acro{SIC}{Successive Interference Cancellation}
	\acro{SINR}{Signal-to-Interference plus Noise Ratio}
	\acro{SIR}{Signal-to-Interference Ratio}
	\acro{SLN}{Shifted Log-Normal}
	\acro{SM}{State Machine}
	\acro{SMP}{Semi-Markov Process}
	\acro{SNR}{Signal to Noise Ratio}
	\acro{SON}{Self-Organizing Network}
	\acro{SPP}{Spatial Poisson Process}
	\acro{STA}{Standard Tree Algorithm}
	\acro{TAS}{Transmit Antenna Selection}
	\acro{TCP}{Transmission Control Protocol}
	\acro{TC}{Transmission Capacity}
	\acro{TDD}{Time Division Duplexing}	
	\acro{TDMA}{Time Division Multiple Access}	
	\acro{TS}{Terminal Station}
	\acro{TTI}{Transmission Time Interval}
	\acro{UDM}{Unit Disk Model}
	\acro{UD}{Unit Disk}
	\acro{UE}{User Equipment}
	\acro{ULUTRANSIM}[UL UTRANSim]{R6 Uplink UTRAN Simulator}
	\acro{UL}{Uplink}
	\acro{UML}{Unified Modeling Language}
	\acro{UMTS}{Universal Mobile Telecommunications System}
	\acro{WCDMA}{Wideband Code Division Multiple Access}
	\acro{WSN}{Wireless Sensor Network}
	\acro{iid}[\textup{i.i.d.}]{independent and identically distributed}
\end{acronym}

\begin{abstract}
	In this work, we assess the viability of heterogeneous networks composed of
	legacy macrocells which are underlaid with self-organizing picocells.
	Aiming to improve coverage, cell-edge throughput and overall system
	capacity, self-organizing solutions, such as range expansion bias, almost
	blank subframe and distributed antenna systems are considered. 
	Herein, stochastic geometry is used to model network deployments, while
	higher-order statistics through the cumulants concept is utilized to
	characterize the probability distribution of the received power and
	aggregate interference at the user of interest.
	A comprehensive analytical framework is introduced to evaluate the
	performance of such self-organizing networks in terms of outage probability
	and average channel capacity with respect to the tagged receiver.
	To conduct our studies, we consider a shadowed fading channel model
	incorporating log-normal shadowing and Nakagami-$m$ fading.
	Results show that the analytical framework matches well with numerical
	results obtained from Monte Carlo simulations.
	We also observed that by simply using almost blank subframes the aggregate
	interference at the tagged receiver is reduced by about $12\mathrm{dB}$.
	Although more elaborated interference control techniques such as, downlink
	bitmap and distributed antennas systems become needed, when the density of
	picocells in the underlaid tier gets high.
\end{abstract}

\acresetall

\section{Introduction}
\label{SEC:INTRODUCTION}
%
%
Targeting at upcoming releases, the \ac{3GPP} standardization body has focused
on enhancing the end-user satisfaction and performance of \ac{LTE} systems by
adopting new deployments strategies and concepts such as \acp{HN} and
self-organization.
In fact, legacy cellular systems with predefined structure and centralized
coordination cannot keep up with the stringent requirements of next generation
wireless systems, which demand high spectral efficiency and ubiquitous coverage
with fairness at cell border.
For instance, \ac{LTE}-Advanced aims at peak data rates up to
$1\,\mathrm{Gbps}$ which contrasts with current \ac{LTE} systems which deliver
at most $100\,\mathrm{Mbps}$ or even \ac{ADSL} technology over cooper landlines
that can transmit at $24\,\mathrm{Mbps}$ only.
Operators have indeed very few options available to meet such requirements:
increase the density of macrocell sites, but that hinges on regulatory studies
and approval; upgrade \ac{RAT} which takes time and do not fill the capacity
gap completely; or expand the radio spectrum resource, but that is definitely a
very expensive and lingering alternative.

In this context, heterogeneous deployments which underlay legacy macrocells
with low-cost, -power and -complexity small cells emerge as a promising and
inexpensive alternative to meet these strict requirements.
Future networks indeed benefit from self-organization in several situations,
for example, to cope with the uncertainties of random networks wherein moving
nodes need to communicate over volatile wireless channels; and to dynamically
reconfigure and maintain infrastructureless deployments of small cells with
large amount of nodes in which traditional and centralized methods become
costly or even unfeasible.
In order to tap into the full benefits of large-scale \acp{SON}, a number of
challenges still need to be tackled, including their deployment, operation,
automation and maintenance \cite{ART:PREHOFER-ICM05, ART:ALIU-ICST12}.

\subsection{Related Work}
\label{SEC:RELATED_WORK}
The design and implementation of self-organizing functionalities in \acp{HN} is
a topic of significant interest as evidenced by the number of recent
publications \cite{ART:PREHOFER-ICM05, ART:AKHTMAN-PROC10, ART:GUVENC-ICL11,
PROC:MUKHERJEE-ASILOMAR11, ART:PEREZ-JSAC12, ART:MUKHERJEE-JSAC12,
PROC:JO-GLOBECOM11}.
For instance, the self-organization concept is used to devise cognitive radio
resource management schemes to mitigate cross-tier interference and guarantee
users \ac{QoS} in distinct heterogeneous deployments scenarios
\cite{ART:LIEN-TWC11}.
More recently, the \ac{REB} concept is discussed within \ac{3GPP} as a baseline
solution to boost the offloading potential of heterogeneous deployments.
In that regard, Authors in \cite{PROC:OKINO-ICC11} investigate the cell range
expansion and interference mitigation in heterogeneous networks. 
Following the same lines, G\"{u}ven\c{c} instigates the capacity and fairness
of heterogeneous networks with range expansion and interference coordination
\cite{ART:GUVENC-ICL11}.
In \cite{PROC:JO-GLOBECOM11}, Jo \textit{et al.} use the \ac{SG} framework to
assess how the biased cell association procedure performs in heterogeneous
networks by means of the outage probability.
\textcolor{black}{In multi-tier heterogeneous networks where the locations of
\acp{BS} are modeled as independent \ac{PPP}, the joint distribution of the
downlink \ac{SINR} at the tagged receiver is derived when the serving \ac{BS}
is selected as either the nearest or the strongest with respect to the user
of interest \cite{PROC:MUKHERJEE-ICC12}.}

\subsection{Contributions and Organization}
\label{SEC:CONTRIBUTIONS AND ORGANIZATION}
In this work, we assess the performance of heterogeneous networks consisting of
legacy macrocells with underlaid small cells.
The offloading potential and self-organizing feature of small cells are studied
so as to increase the overall spectral efficiency and meet the requirements of
next generation systems as well.
After providing definitions and models in Section \ref{SEC:SYSTEM_MODEL}, a
comprehensive analytical framework which resorts to \ac{SG} and \ac{HOS} is
then introduced to evaluate the performance of such heterogeneous deployments.
We discuss the self-organizing solutions and network operation in Section
\ref{SEC:NETWORK_OPERATION}.
In that regard, we consider heterogeneous scenarios which employ \ac{REB} to
improve spatial reuse and balance load between tiers.
To cope with the resulting \ac{CCI}, \ac{ABS} is considered as the baseline
\ac{ICIC} technique.
Thereafter, a bitmap indicator, referred to as \ac{DL}-\ac{HII}, is used to
identify the dominant interferers and improve the \ac{SIR} at the receiver of
interest.
We then investigate the concept of virtual \ac{DAS} which is yet another
self-organization solution to mitigate interference and improve the received
signal at the receiver of interest.
Afterwards, practical evaluation scenarios are defined in Section \ref{SEC:CCI}
wherein the resulting cross-tier interference and practical mechanism to
mitigate it are of primary interest.
%
%
Numerical results are provided in Section \ref{SEC:PERFORMANCE_ANALYSIS}.
Finally, we draw conclusions and make final remarks in Section
\ref{SEC:CONCLUSIONS}.

\section{System Model and Analytical Framework}
\label{SEC:SYSTEM_MODEL}
In preparation for the description of the evaluation scenarios and their
performance analysis, we first present our assumptions, make definitions and
introduce our system models.

\subsection{Definitions and Notation}
\label{SEC:DEFINITIONS_AND_NOTATIONS}

\begin{definition}{\textbf{(Tagged receiver)}}
	The \ac{MU} who is taken as the reference to compute the aggregate \ac{CCI}
	and performance metrics on the \ac{DL} of the evaluation scenarios.
	In stochastic geometry, Palm distributions and the related Campbell's
	theorem are used to characterize a random pattern with respect to a typical
	point of the process, so that network-wide performance can be characterized
	by the average behavior of this ``tagged'' node
	\cite{BOOK:BADDELEY-SPRINGER03, ART:ANDREWS-CM10}.
	\label{DEF:TAGGED_RECEIVER}
\end{definition}

\begin{definition}{\textbf{(Observation region)}}
	An annular region around the tagged receiver over which we account for the
	aggregate interference.
	The observation region is denoted by $\mathcal{O}$ and defined by the
	minimum and maximum radii $R_m$ and $R_M$, respectively.
	\label{DEF:OBSERVATION_REGION}
\end{definition}

\begin{definition}{\textbf{(Partial moment of a random variable)}}
	Let $Y$ be a \ac{RV}, then $\operatorname{E}_Y^n \negthinspace \left [ y_m,
	y_M \right ] = \int_{y_m}^{y_M} {y^n f_Y \negthinspace \left ( y
	\right)\mathrm{d}y}$ denotes its $n^\mathrm{th}$ partial moment with $y_m$
	and $y_M$ indicating the lower and upper integration limits, respectively.
	\label{DEF:PARTIAL_MOMENT}
\end{definition}

\subsection{Propagation Channel Model}
\label{SEC:PHY_MODEL}
Radio links are degraded by path loss and shadowed fading, which is assumed to
be independent over distinct network entities and positions.
A signal strength decay function, $l\negthinspace\left( r \right) =
r^{-\alpha}$, where $\alpha$ is the path loss exponent, describes the path loss
attenuation (unbounded path loss model \cite{ART:INALTEKIN-JSAC09}), while the
received squared-envelop due to multi-path fading and shadowing is represented
by a \ac{RV} $X \in \mathbb{R}^+$ with \ac{CDF} and \ac{PDF} denoted by
$F_X\negthinspace\left( x \right)$ and $f_X\negthinspace\left( x \right)$,
respectively.
An arbitrary interferer disrupts the communication of the tagged receiver with
a component given by
\begin{align}
	Y = p\,\thinspace l\negthinspace\left( r \right) x,
	\label{EQ:INTERFERENCE_COMPONENT}
\end{align}
\noindent{where $p$ yields this interferer transmitted power, $r$ is the
separation distance from its position to the tagged receiver, and $x$ yields
the corresponding shadowed fading.}

The composite distribution of the received squared-envelop due to \ac{LN}
shadowing and Nakagami-$m$ fading has a Gamma-\ac{LN} distribution with
\ac{PDF} \cite{BOOK:STUBER-SPRINGER00},
\begin{align}
	f_X(x) = \int\limits_{0}^{\infty}\left(\frac{m}
	{\omega}\right)^m\frac{x^{m-1}} {\Gamma(m)}
	\exp\left(-\frac{m}{\omega}x\right)\times\frac{\xi}{\sqrt{2 \pi} \sigma
	\omega}\exp{\left[ -\frac{\left ( \xi\ln{\omega} -\mu_{\Omega_p}
	\right)^2}{2 \sigma_{\Omega_p}^2} \right]}\mathrm{d}\omega,
	\label{EQ:GAMA_LOGNORMAL_PDF}
\end{align}
{\noindent where $m$ is the shape parameter of the Gamma distribution, $\xi =
\ln \left( 10 \right)/10$, $\Omega_p$ is the mean squared-envelop,
$\mu_{\Omega_p}$  and $\sigma_{\Omega_p}$ is the mean and standard deviation of
$\Omega_p$, respectively.}

Ho \textit{et al.} show in \cite{ART:HO-ACM95} that a composite Gamma--\ac{LN}
distribution can be approximated by a single \ac{LN} distribution with mean and
variance (in logarithmic scale) given by $\mu_{\mathrm{dB}} = \xi \left [
\psi\left ( m \right ) - \ln\left ( m \right)\right ] + \mu_{\Omega_p}$ and
$\sigma_{\mathrm{dB}}^2 = \xi^{2}\zeta\left ( 2, m \right) +
\sigma_{\Omega_p}^2$, where $\psi\left ( m \right )$ is the Euler psi function
and $\zeta\left ( 2, m \right )$ is the generalized Riemann zeta function
\cite{BOOK:ABRAMOWITZ-DOVER03}.
In what follows, we use this single \ac{LN} approximation to characterize the
radio channel attenuations in various evaluation scenarios.

\subsection{Network Deployment Model}
\label{SEC:DEPLOYMENT_MODEL}
The \ac{DL} of a heterogeneous networks consisting of an umbrella \ac{MBS} and
an underlaid tier of self-organizing small cells is modeled.
We assume that \acp{MBS} follow centralized coordination and spectrum
allocation such that inter macrocell interference is mitigated, for example,
using fractional frequency reuse \cite{ART:PEREZ-JSAC12}.
In these scenarios, picocells are uniformly scattered over the network area,
while both tiers operate in \ac{TDD} mode and share the whole spectrum.
\textcolor{black}{In addition, communicating nodes are assumed to be
synchronized so that uplink and downlink transmissions do not interfere with
each other.}
In every \ac{SF}, each serving \ac{BS} schedules a single user terminal and
interference coordination are implemented in the time-domain.
The set of associated user terminals are also uniformly distributed within the
transmission range of their serving cells.
Nodes communicate using antennas with omni directional radiation pattern and
fixed power.
The macrocell tier transmits at a maximum power of $46\,\mathrm{dBm}$ and
picocells use $30\,\mathrm{dBm}$.

Active picocells constitute a homogeneous \ac{PPP} $\Phi$ with density
$\lambda$ in $\mathbb{R}^2$.
The number of picocells in an arbitrary region $\mathcal{R}$ of area $A$ is a
Poisson \ac{RV} with parameter $\lambda A$ \cite{ BOOK:KINGMAN-OXFORD93}.
Additionally, we assume the fading effect as a random mark associated with each
point of $\displaystyle {\Phi}$.
By virtue of the \textup{Marking theorem} \cite{BOOK:BADDELEY-SPRINGER03,
BOOK:KINGMAN-OXFORD93}, the resulting process,
\begin{align}
	{\widetilde{\Phi}} = \left\{ \left( \varphi, x \right); \varphi \in \Phi
	\right\},
	\label{EQ:MPP}
\end{align}
corresponds to a \ac{MPP} on the product space $\mathbb{R}^2 \times
\mathbb{R}^{+}$, whose random points $\varphi$ denoting transmitters locations
and belong to the stationary point process ${{\Phi}}$.

\subsection{\acl{HOS} and the \ac{LN} approximation}
\label{SEC:ANALYTICAL_FRAMEWORK}
We introduce our analytical framework which uses stochastic geometry to model
network deployments \cite{BOOK:KINGMAN-OXFORD93, BOOK:STOYAN-WILEY95}, and
higher order statistics through the cumulants concept to recover both the
distributions of the received power $Y$ and the aggregate \ac{CCI} $Z$ at the
tagged receiver \cite{BOOK:ABRAMOWITZ-DOVER03, ART:GHASEMI-JSAC08}.
The Slivnyak's theorem and its associated Palm probability are then used to
derive the aggregate \ac{CCI} and compute average performance figures
conditional on the location of the tagged receiver.

To establish this framework, we begin by applying Campbell's theorem
\cite{BOOK:KINGMAN-OXFORD93, BOOK:STOYAN-WILEY95} to the \ac{MPP}
$\widetilde{\Phi}$ defined in \eqref{EQ:MPP} so as to determine the \ac{CF} of
the distribution of the aggregate \ac{CCI}.
\begin{definition}
	Let $Z=\sum_{\left( \varphi, x \right) \in \widetilde{\Phi}} Y$ be a
	\ac{RV} representing the aggregate \ac{CCI} generated by the interfering
	process $\widetilde{\Phi}$, and $j=\sqrt{-1}$ be the imaginary unity; then,
	the function $\Psi:\mathbb{R} \rightarrow \mathbb{C}$ defined as,
	\begin{align}
	\Psi_Z\left ( \omega \right )=\operatorname{E}\left [ e^{ j\omega Z} \right
	],
	\label{EQ:CHARACTERISTIC_FUNCTION}
	\end{align}
	is called the \ac{CF} of $Z$.
\end{definition}

The corresponding $n^\textnormal{th}$ cumulants are obtained by computing
higher order derivatives of \eqref{EQ:CHARACTERISTIC_FUNCTION} as presented in
our next proposition \cite{BOOK:ABRAMOWITZ-DOVER03}.
\begin{proposition}
	Let $Z$ be a \ac{RV} and $\Psi_Z\left ( \omega \right )$ its \ac{CF}.
	Let $n \in \mathbb{N}$.
	Provided that the $n^{\text{th}}$ moment exists and is finite.
	Then, $\Psi_Z\left ( \omega \right )$ is differentiable $n$ times and 
	\begin{align}
		\kappa_{n} = \frac{1}{j^n}\left [ \frac{\partial^n }{\partial
		\omega^n}\ln\Psi_{Z} \left ( \omega \right )\right
		]_{\omega=0}\hspace{-1.75em}.
		\label{EQ:CUMULANT_PARTIAL_DERIVATIVE}
	\end{align}
\end{proposition}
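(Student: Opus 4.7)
The plan is to obtain the identity by differentiating the cumulant generating function $K_Z(\omega)=\ln\Psi_Z(\omega)$ and matching Taylor coefficients with the classical definition of cumulants. First I would establish that, under the assumption that the $n$th moment $m_n=\operatorname{E}[Z^n]$ exists and is finite, the characteristic function $\Psi_Z(\omega)$ admits the $n$th-order Taylor expansion
\begin{align}
\Psi_Z(\omega)=\sum_{k=0}^{n}\frac{(j\omega)^{k}}{k!}\,m_{k}+o(\omega^{n}),\qquad \omega\to 0,
\label{EQ:MOMENT_EXPANSION}
\end{align}
by dominated convergence applied to $e^{j\omega Z}=\sum_{k=0}^{n-1}(j\omega Z)^{k}/k!+R_n(\omega,Z)$ with remainder bounded by $|\omega|^n|Z|^n/n!$. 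This simultaneously shows $\Psi_Z$ is $n$ times differentiable at $\omega=0$ with $[\partial_\omega^k\Psi_Z]_{\omega=0}=j^k m_k$.

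Next I would pass from $\Psi_Z$ to $K_Z=\ln\Psi_Z$. Since $\Psi_Z(0)=1$ and $\Psi_Z$ is continuous, the principal branch of the logarithm is well-defined and $C^n$ on a neighbourhood of the origin; thus $K_Z$ is itself $n$ times differentiable there. Cumulants $\kappa_k$ are, by their very definition, the Taylor coefficients of $K_Z$ about $\omega=0$, i.e.
\begin{align}
K_Z(\omega)=\sum_{k=1}^{n}\frac{(j\omega)^{k}}{k!}\,\kappa_{k}+o(\omega^{n}).
\label{EQ:CUMULANT_EXPANSION}
\end{align}
Matching the coefficient of $\omega^n$ in \eqref{EQ:CUMULANT_EXPANSION} with the $n$th derivative yields $[\partial_\omega^n K_Z]_{\omega=0}=j^n\kappa_n$, from which \eqref{EQ:CUMULANT_PARTIAL_DERIVATIVE} follows by dividing by $j^n$.

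Alternatively, and perhaps more transparently, one can verify the identity inductively by applying Fa\`a di Bruno's formula to $K_Z=\ln\Psi_Z$, recovering the classical moment--cumulant relations $\kappa_n=m_n-\sum_{k=1}^{n-1}\binom{n-1}{k-1}\kappa_k\, m_{n-k}$ and checking these agree with $(j^{-n})[\partial_\omega^n\ln\Psi_Z]_{\omega=0}$ via \eqref{EQ:MOMENT_EXPANSION}. I expect the main technical obstacle to lie in the first step: rigorously justifying the $n$th-order differentiability of $\Psi_Z$ at the origin from mere existence of $m_n$ (rather than from analyticity, which would require moment generating conditions). Once the expansion \eqref{EQ:MOMENT_EXPANSION} is in hand, the remainder of the argument is purely formal Taylor-series manipulation.
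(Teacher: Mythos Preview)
Your sketch is correct and follows the standard textbook argument: establish $n$-fold differentiability of $\Psi_Z$ at the origin via dominated convergence and the moment bound on the Taylor remainder of $e^{j\omega Z}$, then take the logarithm near $\omega=0$ and read off the cumulants as Taylor coefficients of $K_Z=\ln\Psi_Z$. The paper itself does not give a proof at all; it simply cites \cite[Section~9.4]{BOOK:RESNICK-BIRKHAUSER1999}, which contains essentially the argument you outline (differentiation under the integral sign justified by $\operatorname{E}|Z|^n<\infty$, followed by the formal series manipulation). So you are supplying what the paper outsources, and your route matches the cited source. The only caveat worth tightening is that finiteness of the $n$th moment gives $n$-fold differentiability of $\Psi_Z$ \emph{at the origin} (and in fact on all of $\mathbb{R}$, since $|Z|^n e^{j\omega Z}$ is dominated by $|Z|^n$ uniformly in $\omega$), which you correctly flag as the one genuinely analytic step; the rest is indeed formal.
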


\begin{IEEEproof}
	See \cite[Section 9.4]{BOOK:RESNICK-BIRKHAUSER1999}.
\end{IEEEproof}

Motivated by the fact that the density of $Z$ has no exact closed form
expression \cite{PROC:WU-GLOBECOM05} and that its distribution is heavy-tailed
and positively skewed \cite{ART:GHASEMI-JSAC08}, we use the \ac{LN}
approximation whose parameters are estimated from the cumulants of the
aggregate \ac{CCI}.
We relate the parameters of this equivalent \ac{LN} distribution to the
cumulants of the actual distribution of the aggregate \ac{CCI} as follows,
\begin{align}
	\mu = \ln{\left(\frac{\kappa_1^2}{\sqrt{\kappa_1^2 + \kappa_2}}\right)},
	\thickspace \text{and} \thickspace \sigma^2 = \ln{\left( 1 +
	\frac{\kappa_2}{\kappa_1^2} \right)}.
	\label{EQ:LOGNORMAL_PARAMETERS}
\end{align}
{\noindent where $\mu$ is the mean and $\sigma$ is the standard deviation of
the distribution $\mathsf{Normal}(\mu, \sigma^2)$ in the logarithmic scale.}

\section{Biased Cell Association and Handover Probability}
\label{SEC:REB}
Following the standard handover procedure \cite{PROC:DIMOU-VTC2009}, the tagged
\ac{MU} is transferred to the underlaid picocell tier only if the pilot signal
of the target \ac{PBS} is strictly higher than the umbrella
\ac{MBS}\footnote{We consider that migrating \acp{MU} are not affected by the
ping-pong effect and that the predefine triggering time has already elapsed
\cite{TR:3GPP-TS23.009}.} as follows,
\begin{align}
	Y^\mathrm{P} > Y^\mathrm{M} + \Omega,
	\label{EQ:STD_HO}
\end{align}
\noindent where the \ac{RV} $Y^\mathrm{P}$ refers to the power received from
the target \ac{PBS}, $Y^\mathrm{M}$ yields the power received from the umbrella
\ac{MBS} and $\Omega$ is the handover hysteresis to avoid the ping-pong effect
\cite{TR:3GPP-TS23.009}.

However, in most circumstances, the umbrella \ac{MBS} overpowers the underlaid
tier which shrinks the coverage of the small cells and compromises the expected
gains of spatial and frequency reuse \cite{PROC:OKINO-ICC11, ART:GUVENC-ICL11}.
In such large-scale heterogeneous deployments, transceivers have various
communication capabilities and the restrictive nature of the typical handover
procedure worsen the load unbalance problem across tiers.
To alleviate this problem, \acs{3GPP} suggests adding a positive bias $\Delta
\mathrm{REB}$ to the picocells received power so that the rate of \acp{MU}
handovers to the underlaid tier increases \cite{PROC:OKINO-ICC11} as given next
\begin{align}
	Y^\mathrm{P} + \Delta \mathrm{REB} > Y^\mathrm{M} + \Omega.
	\label{EQ:REB_HO}
\end{align}

Indeed, the \ac{REB} prompt the macrocell offloading and improves the spectral
efficiency by relaxing the standard association criteria used by \acp{MU}.
Unfortunately, by doing so, \acp{MU} within the expanded region of picocells do
not actually connect to the strongest \acp{BS} and are exposed to high
interference levels from the macrocell tier.
Fig. \ref{FIG:REB} illustrates the operation of the \ac{REB} concept in
heterogeneous scenarios composed of an umbrella macrocell and underlaid
picocells.
The coverage area of the target picocell is artificially increased by the
positive bias $\Delta \mathrm{REB}$ as indicated by the handover criterion in
\eqref{EQ:REB_HO}.
As a result, \acp{MU} are offloaded to the picocell tier more often and
unburden the umbrella macrocell.
\begin{figure}[h!]
	\centering
	\includegraphics[width=1.\columnwidth]{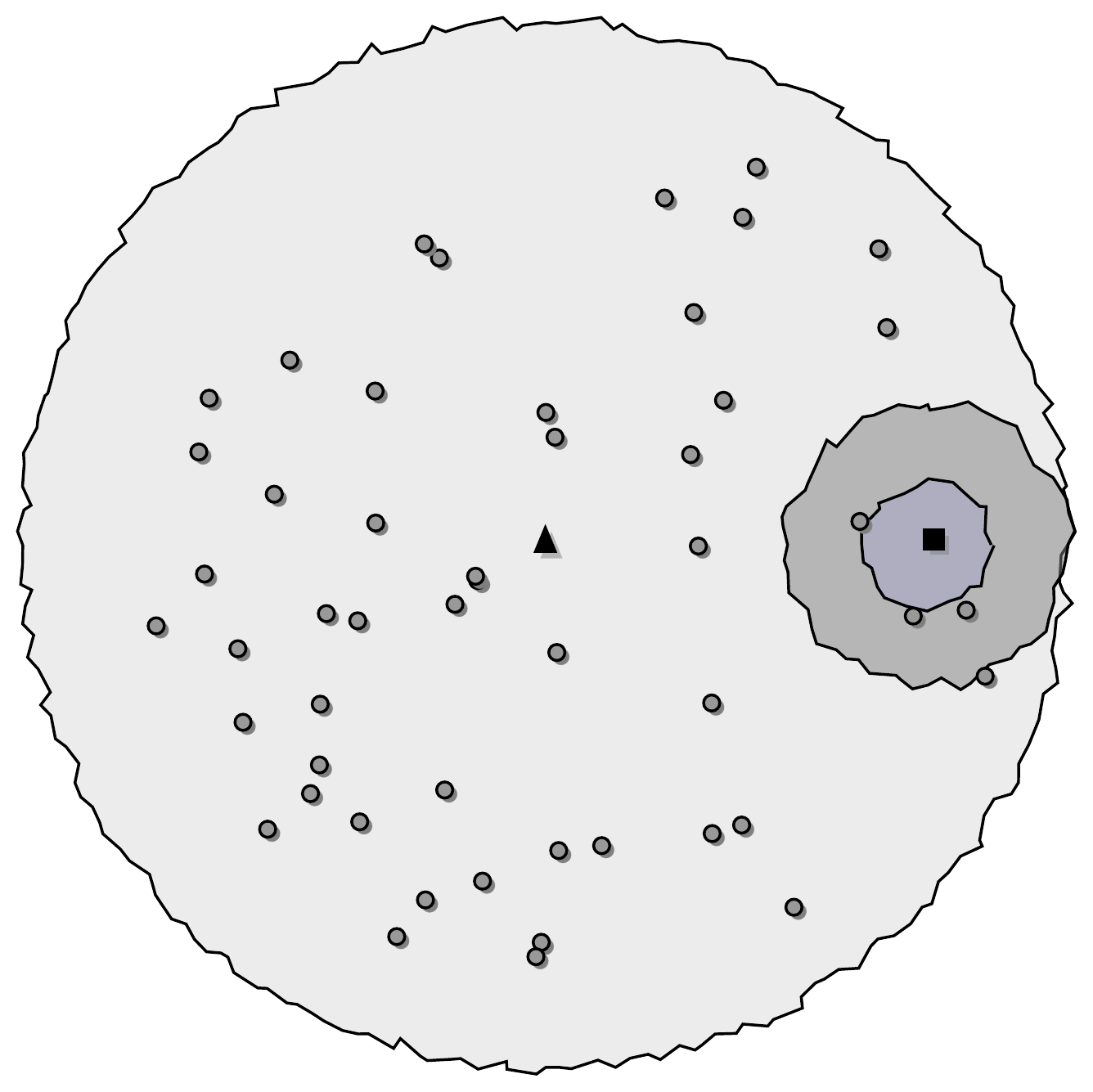}
	\begin{tikzpicture}[overlay]
		%
		\draw[|<->|, thick, line width = 2pt] (6, 7.7) -- node[rotate=90, above
		right=.5em, sloped] {$\mathbf{\mathrm{REB}}$} (6, 6.5);
	\end{tikzpicture}
	\begin{tikzpicture}[overlay, domain=1:8]
		\draw[->, thick] (-1, 8.75) -- (9, 8.75) node[below] {distance};
		\draw[->, thick] (-.1, 7.4) -- (-.1, 18) node[above] {Received power};
		\draw[color=red, xshift=-1cm, yshift=8cm, line width=2pt] plot[id=mbs]
		function{2/log(x)} node[right=0em] {$\mathbf{y^\mathrm{M} \left( d
		\right)}$};
		\draw[color=red, xshift=-2cm, yshift=8.5cm, line width=2pt, dashed]
		plot[id=pbs] function{1/log(-x + 9)} node[right=0em]
		{$\mathbf{y^\mathrm{P} \left( d \right) + \Delta \mathrm{REB}}$};
		\draw[color=red, xshift=-2cm, yshift=8.5cm, line width=2pt]
		plot[id=reb] function{.5/log(-x + 9)} node[right=0em]
		{$\mathbf{y^\mathrm{P} \left( d \right)}$};
	\end{tikzpicture}
	\caption{Illustration of the \ac{REB} concept.
	Circles indicate \acp{MU}, the shaded triangle depicts the umbrella
	\ac{MBS} and the shaded square depicts the target picocell.}
	\label{FIG:REB}
\end{figure}

From \eqref{EQ:REB_HO}, we derive the probability that the tagged \ac{MU}
within the coverage of the umbrella \ac{MBS} is offloaded to the target
\ac{PBS}.
The \ac{LN} approximation in \eqref{EQ:LOGNORMAL_PARAMETERS} is used here to
recover the distribution of the received power at the tagged receiver.
\begin{proposition}
	Consider the observation region $\mathcal{O}$ centered at the tagged
	receiver and the biased cell association as described above; then, the
	probability that the tagged receiver connects to the target \ac{PBS} is
	given by,
	\begin{align}
		\Pr \left[ Y^\mathrm{M} < Y^\mathrm{P} + \delta \right] \simeq
		\sum^{K}_{k=1} {\frac{\omega_k}{2 \sqrt{\pi}} g \left( \eta_k \right)},
		\label{EQ:REB_PROBABILITY}
	\end{align}
	{\noindent where $\delta = \Delta \mathrm{REB} - \Omega$, $\eta_{k}$ is the
	$k^\mathrm{th}$ zero of the Hermite polynomial $H_{K} \left( \eta \right)$
	of degree $K$, $\omega_{k}$ is the corresponding weight of the function $g
	\left( \,\cdot\, \right)$ at the $k^\mathrm{th}$ abscissa and $g \left(
	\eta \right) = 1 + \operatorname{Erf} \left[\frac{-\mu_M + \mu_P + \sqrt{2}
	\eta \sigma_P}{\sqrt{2} \sigma_M}\right]$.}
	\label{PROPOSITION:REB_PROBABILITY}
\end{proposition}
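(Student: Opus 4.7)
The plan is to evaluate the probability directly once we invoke the log-normal approximation of Section~\ref{SEC:ANALYTICAL_FRAMEWORK} to model $Y^\mathrm{M}$ and $Y^\mathrm{P}$, and then reduce the resulting one-dimensional integral to a weighted sum via Gauss--Hermite quadrature. Throughout I would treat $Y^\mathrm{M}$ and $Y^\mathrm{P}$ as independent (the macrocell signal and the target picocell signal experience independent shadow-fading realisations under the channel model of Section~\ref{SEC:PHY_MODEL}), so that in the logarithmic scale $\ln Y^\mathrm{M} \sim \mathsf{Normal}(\mu_M,\sigma_M^2)$ and $\ln Y^\mathrm{P} \sim \mathsf{Normal}(\mu_P,\sigma_P^2)$ independently, with the $\mu$'s and $\sigma$'s obtained from the cumulant matching of \eqref{EQ:LOGNORMAL_PARAMETERS} (and with any $\delta$-offset absorbed into $\mu_P$ so that the handover condition \eqref{EQ:REB_HO} reads $\ln Y^\mathrm{M} < \ln Y^\mathrm{P}$ in the shifted variables).

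First, I would condition on the picocell level $t = \ln Y^\mathrm{P}$ to write
\begin{align}
\Pr\bigl[\ln Y^\mathrm{M} < \ln Y^\mathrm{P}\bigr] = \int_{-\infty}^{\infty} \Pr\bigl[\ln Y^\mathrm{M} < t\bigr]\, f_{\ln Y^\mathrm{P}}(t)\, \mathrm{d}t.
\label{EQ:PLAN_CONDITIONAL}
\end{align}
The inner probability is the Gaussian CDF, which I would rewrite using the standard identity $\Phi(u)=\tfrac{1}{2}[1+\operatorname{Erf}(u/\sqrt{2})]$, giving
\begin{align}
\Pr\bigl[\ln Y^\mathrm{M} < t\bigr] = \tfrac{1}{2}\Bigl[1+\operatorname{Erf}\Bigl(\tfrac{t-\mu_M}{\sqrt{2}\,\sigma_M}\Bigr)\Bigr].
\label{EQ:PLAN_ERF}
\end{align}

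Second, I would perform the change of variable $\eta = (t-\mu_P)/(\sqrt{2}\,\sigma_P)$, so that $f_{\ln Y^\mathrm{P}}(t)\,\mathrm{d}t = \pi^{-1/2}\,e^{-\eta^2}\,\mathrm{d}\eta$ and the argument of the Erf becomes $(-\mu_M + \mu_P + \sqrt{2}\,\eta\,\sigma_P)/(\sqrt{2}\,\sigma_M)$. This puts the integrand precisely in the canonical Gauss--Hermite form $\int_{-\infty}^{\infty} e^{-\eta^2}\,h(\eta)\,\mathrm{d}\eta$, where $h(\eta)=\tfrac{1}{2}g(\eta)$ with $g$ as defined in the statement.

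Third, I would invoke the Gauss--Hermite quadrature rule $\int_{-\infty}^{\infty} e^{-\eta^2}\,h(\eta)\,\mathrm{d}\eta \approx \sum_{k=1}^{K}\omega_k\,h(\eta_k)$, where $\{\eta_k\}$ are the zeros of $H_K$ and $\{\omega_k\}$ the associated weights; inserting the factor $\tfrac{1}{2}$ and the $\pi^{-1/2}$ from the change of variable yields the claimed approximation with prefactor $\omega_k/(2\sqrt{\pi})$. The only substantive ingredient is the assumption of independent log-normal approximations for $Y^\mathrm{M}$ and $Y^\mathrm{P}$; everything else is a routine change of variable plus quadrature, so I do not expect a genuine obstacle — the most delicate bookkeeping is making sure the bias offset $\delta$ is consistently folded into the mean $\mu_P$ so that the argument of $g$ has the exact form displayed in the proposition.
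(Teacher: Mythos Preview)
Your approach is correct and mirrors the paper's proof: both condition on the picocell received power, express the inner probability via the Gaussian CDF/Erf, change variables to expose the $e^{-\eta^2}$ weight, and finish with Gauss--Hermite quadrature. The only difference is that the paper keeps $\delta$ as a \emph{linear}-scale offset throughout (so the Erfc argument it obtains is $\bigl[\mu_M-\log(c+e^{\mu_P+\sqrt{2}\eta\sigma_P})\bigr]/(\sqrt{2}\sigma_M)$ rather than the simpler form in $g$), whereas you fold $\delta$ into $\mu_P$ in the log domain from the outset; since $\Delta\mathrm{REB}$ and $\Omega$ are dB quantities, your bookkeeping is precisely what is needed to land on the stated $g(\eta)$, so your instinct that this is the delicate step is well placed.
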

\begin{IEEEproof}
	See Appendix \ref{PROOF:REB_PROBABILITY}.
\end{IEEEproof}
When using the standard procedure in \eqref{EQ:STD_HO}, one needs to make the
substitution $\delta = - \Omega$ in \eqref{EQ:REB_PROBABILITY} to derive the
handover probability.

Fig. \ref{FIG:HO_PROB} shows the handover probability for distinct network
configurations.
To generate this plot, we consider that the tagged receiver is randomly placed
around the umbrella \ac{MBS} in an annular region with inner radius equal to
$25\,\mathrm{m}$ and outer radius of either $250\,\mathrm{m}$ or
$500\,\mathrm{m}$.
Notice that this region actually defines the minimum and maximum distances
between the tagged receiver and the umbrella \ac{MBS}.
In addition, the distance from the tagged receiver to its serving picocell
varies within the set $\left\{ 15, 30, 45 \right\}\mathrm{m}$.
When the \ac{MU} is near to the target picocell, the \ac{REB} does not affect
the handover probability so significantly.
However, the \ac{REB} effect becomes pronounced when the user is located
farther away from the picocell of interest.
For sake of illustration, we consider the tagged user located $45\,\mathrm{m}$
away from the target picocell and bias of $\Delta \mathrm{REB} =
5\,\mathrm{dB}$.
In contrast to the standard approach in \eqref{EQ:STD_HO}, the handover
probability increases from $38\%$ to $54\,\%$ (dashed line with up-triangles).
\begin{figure}[h!]
	\centering
	\includegraphics[width=1.\columnwidth]{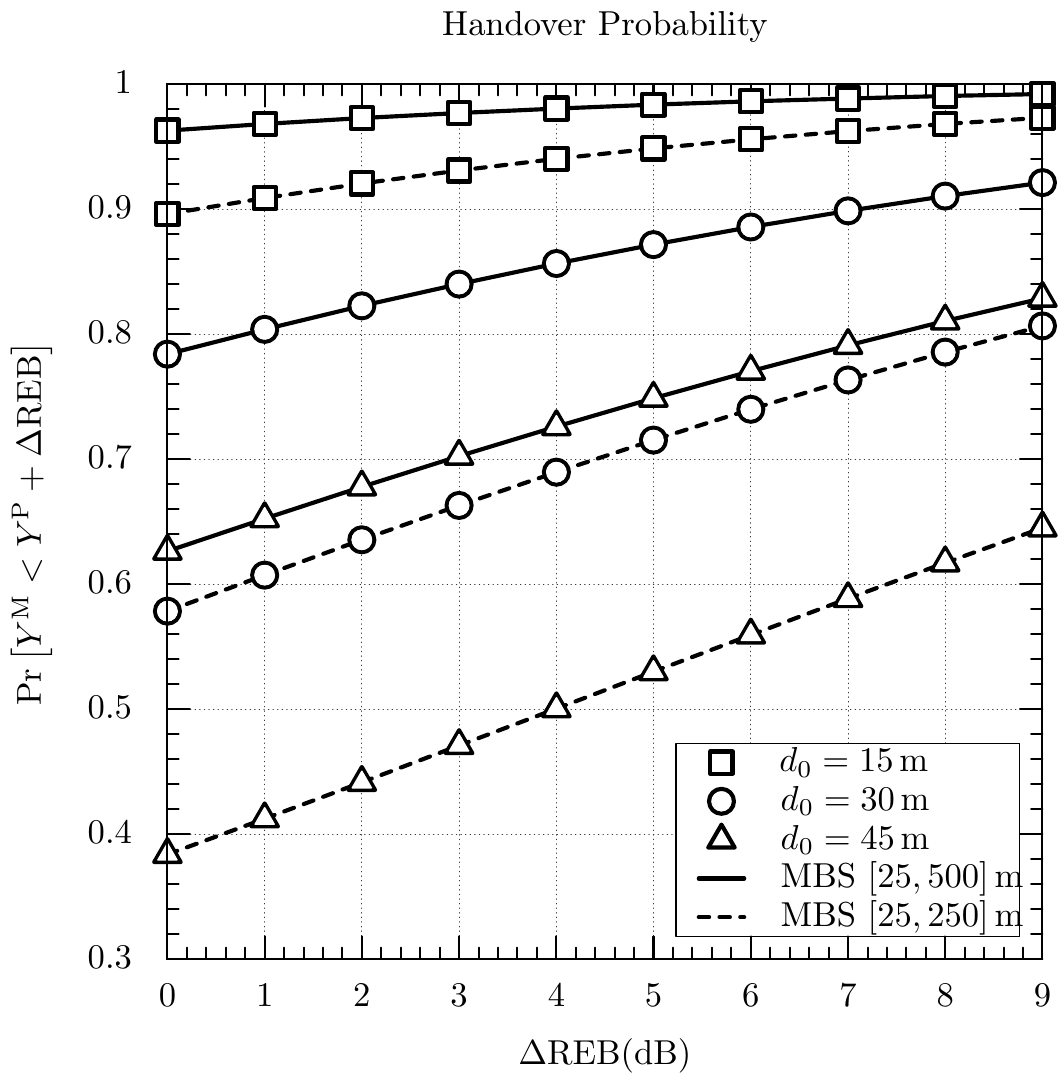}
	\caption{Handover probability as a function of increasing $\Delta
	\mathrm{REB}$ values.}
	\label{FIG:HO_PROB}
\end{figure}

\section{Network Operation}
\label{SEC:NETWORK_OPERATION}
In the coexistence scenarios under study, self-organizing \acp{PBS} employ
distributed strategies to control the cross-tier interference
\cite{ART:PREHOFER-ICM05}.
Hereafter, we describe these solutions and translate their operation to our
mathematical framework so as to identify their impact on the overall system
performance.

\subsection{\acl{ABS}}
\label{SEC:ABS}
By observing Fig. \ref{FIG:REB}, it is clear that within the range expanded
region the received power of the target picocell with \ac{REB} is weaker than
the umbrella \ac{MBS}.
To cope with this problem in such \ac{SON}, the \ac{ABS} is considered as a
baseline strategy to implement interference control.
In fact, \ac{ABS} is a time-domain resource partitioning strategy whereby
\acp{MU} in the expanded region of picocells only transmit within the reserved
slots.
During these reserved slots, the umbrella \ac{MBS} either implements soft
\ac{ABS} by transmitting with less power; or does not transmit at all what
characterizes the zero-power \ac{ABS} \cite{TR:3GPP-TS36.423}.
In Fig. \ref{FIG:ABS}, the aggressor \ac{MBS} does not transmit during the
reserved slots so as to protect the range expanded picocells.
\begin{figure}[h!]
	\centering
	\includegraphics[width=1.\columnwidth]{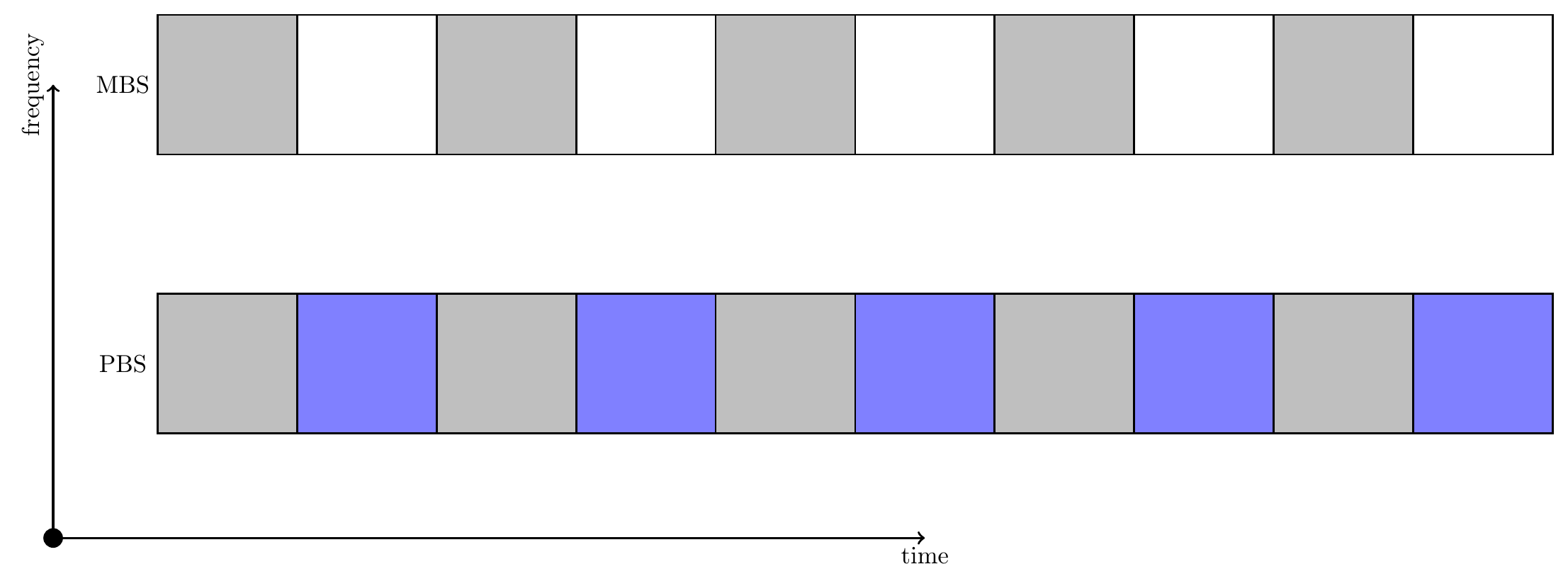}
	\caption{Illustration of the \ac{ABS} strategy with rate of $1/2$.
	The umbrella \ac{MBS} leaves every second subframe (reserved slot) empty so
	that cell edge \acp{MU} which were reassigned to the picocell tier
	experience less interference.}
	\label{FIG:ABS}
\end{figure}
We follow \cite{PROC:OKINO-ICC11} in assuming that only cells with \ac{REB} are
allowed to transmit within the reserved subframes.
The downside of the \ac{ABS} strategy is that non-\ac{REB} cells undergo
capacity loss since reserved subframes are left idle.
Different from the typical approach and depending on the network configuration,
we consider that \ac{ABS} applies to both macro and picocells.
There is no loss of generality in assuming that the umbrella \ac{MBS} reserves
$ 1/2$ of the frame for the \ac{ABS} allocation \cite{ART:PEREZ-JSAC12} when
operating with \ac{REB}.

\subsection{Downlink--\acl{HII}}
\label{SEC:DL_HII}
To avoid the inherent capacity loss of the \ac{ABS} strategy, we also
investigate distributed strategies that rely on the autonomous coordination of
nearby \acp{BS}.
Inspired by the busy tones concept \cite{ART:OMIYI-TWC07} and the interference
mitigation technique in \cite{TR:3GPP-R4-093196}, we consider the utilization
of \ac{DL} channel measurements for the coordination of interfering picocells.
A bitmap indicator which is similar to the \ac{RNTP} indicator in Release $8$
is used to identify dominant interferers \cite{BOOK:SESIA-WILEY09}.
The tagged \ac{MU} identifies potential interferers by monitoring their pilot
signal and reporting the measurements to its serving \ac{BS}.
After acquiring this measurement report, the serving \ac{BS} then coordinates
by exchanging the interference bitmap with the surrounding picocells via the
X$2$ interface.
The updating period of the \ac{DL}-\ac{HII} messages is a configurable
parameter which is comparable to the handover procedure
\cite{ART:BHARUCHA-EURASIP10}.

Within our mathematical framework, the tagged receiver uses the interference
threshold $\rho_{\mathrm{th}}$ to identify potential interferers in its
vicinity.
Since our network operates in \ac{TDD} mode, we assume that the channels for
measurements and data transmissions are fully correlated.
Furthermore, the channel gain between active interferers and the tagged
receiver are assumed to be perfectly estimated by the receiver of interest.

Under the above assumptions, the set of dominant interferers is identified by
the following indicator function,
\begin{equation}
	\mathds{1}_{\widetilde{\Phi}}\left( p_b r^{-\alpha} x \right) =
	\left\{
	\begin{array}{l@{,\,}l}
		1	& \thickspace \text{if} \thickspace p_b r^{-\alpha} x \geq
		\rho_{\mathrm{th}} \\
		0	& \thickspace \text{otherwise},
	\end{array}
	\right.
	\label{EQ:AGGREGATE_CCI_DLHII}
\end{equation}
which defines the first coordination region denoted by $ \mathcal{R}_{1}$, and
where $p_b$ is the transmit power of the reference signal of surrounding
picocells.

In accordance with the formulation of Section \ref{SEC:DEPLOYMENT_MODEL},
picocells within this region constitute a \ac{MPP} which is denoted by
$\displaystyle \widetilde{\Phi}_{1} = \left\{ \left( \varphi, x \right) \in
\widetilde{\Phi}\,|\,p_b r^{-\alpha} x \geq {\rho}_{{th}} \right\}$.
Similarly, picocells in $\mathcal{R}_2$, which are not detected by the \ac{MU}
of interest, form the process $\widetilde{\Phi}_{2} =
\widetilde{\Phi}\backslash\widetilde{\Phi}_{1}$.
Notice that the coordination regions $\mathcal{R}_1$ and $\mathcal{R}_2$ are
disjoint and statistically independent by construction, therefore it follows
immediately from the Superposition theorem \cite{BOOK:KINGMAN-OXFORD93} that
$\widetilde{\Phi} = \widetilde{\Phi}_{1}\cup \widetilde{\Phi}_{2}$.
Fig. \ref{FIG:COORDINATION_REGIONS} illustrates the resulting coordination
regions by following the criterion in \eqref{EQ:AGGREGATE_CCI_DLHII}.
It is worth noticing that after coordinating, the tagged receiver is only
interfered by active transmitters in $\mathcal{R}_2$, since nodes in
$\mathcal{R}_1$ switch to non-conflicting resource allocation.
\begin{figure}[h!]
	\centering
	\includegraphics[width=1.\columnwidth]{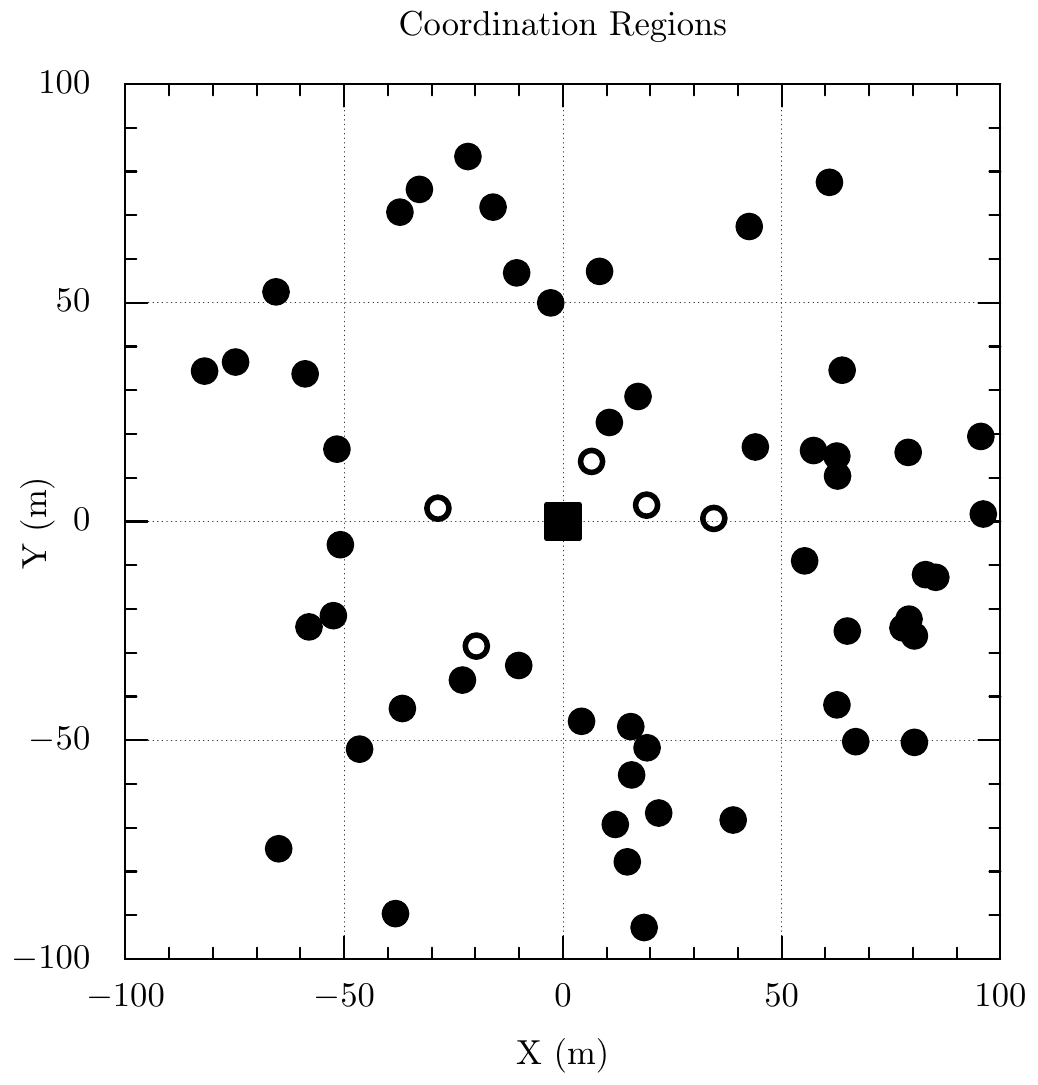}
	\caption{Illustration of the interfering regions.
	Unshaded circles identify dominant interferers within $\mathcal{R}_1$ whose
	received power is above the threshold predefined $\rho_\mathrm{th}$.
	Shaded circles identify remaining interferers within $\mathcal{R}_2$.}
	\label{FIG:COORDINATION_REGIONS}
\end{figure}

\subsection{Virtual \acl{DAS}}
\label{SEC:VDAS}
By following the standard \ac{REB} strategy, a user is served by a \ac{BS}
which does not actually provide the strongest received power which brings about
the side effect of exposing the user of interest to high interference levels.
With the virtual \ac{DAS} strategy, we intend to further exploit the \ac{HII}
bitmap information so that surrounding picocells coordinate over the X$2$
interface and establish virtual \ac{DAS} with random antenna layout.
As discussed in \cite{TR:3GPP-RP111117}, such techniques are actually a work
item in the \ac{3GPP} standardization where large scale remote radio heads are
seen as a promising solution to meet the requirements of release $11$.
Note that with this solution we do not consider \ac{TAS} or precoding and the
received signals of the serving \ac{DAS} coherently add at the user of
interest.
Instead of considering the \ac{REB}, the tagged \ac{MU} uses the aggregate
received power from the serving group which belongs to $\widetilde{\Phi}_1$.
Similar to the coordinated multipoints strategy, this solution requires the
user data to be available at all coordinating picocells which in its turn
requires extra singling exchange and more elaborated backhaul infrastructure
\cite{ART:ZHANG-JSAC10}.
Our strategy is similar to the maximum ratio transmission
\cite{ART:ZHANG-TWC08} by which all antenna elements transmit the same
information and the aggregate received power is 
\begin{align}
	\sum_{\left ( \varphi, x \right ) \in \widetilde{\Phi}_1} \negthickspace
	Y^\mathrm{P} \negthickspace \left( \varphi, x \right)
	\label{eq:mrt}
\end{align}
{\noindent where $Y^\mathrm{P} \negthickspace \left( \varphi, x \right)$ yields
the received power from the picocell at $\varphi$ with shadowed fading $x$.}
\textcolor{black}{It is worth noticing that the random process
$\widetilde{\Phi}_1$ has intensity given by $\Pr \left[ \thickspace p_b
r^{-\alpha} x \geq \rho_{\mathrm{th}} \right] \lambda f_X \left( x \right)$.
}

In what follows, we initially extend the analytical framework presented in
\cite{ART:LIMA-TWC12A} to evaluate how these distributed strategies perform in
heterogeneous networks composed of self-organizing small cells and legacy
macrocells.
Thereafter, the performance of the \ac{SON} is evaluated in terms of \ac{SIR},
outage probability and average spectral efficiency as shown in Section
\ref{SEC:PERFORMANCE_ANALYSIS}.

\section{Interference Model}
\label{SEC:CCI}
Under the assumptions of Section \ref{SEC:SYSTEM_MODEL} and with respect to the
tagged receiver, we now use our analytical framework to derive the probability
distributions of the desired signal and the resulting aggregate \ac{CCI} for
each one of the \acp{ES} described next.
In fact, each \ac{ES} characterizes a particular network configuration, in
which macro and picocells employ distributed strategies to mitigate the
cross-tier interference.
The following list summarizes the evaluation scenarios under consideration.
\begin{itemize}
	\item \ac{ES}$1$: the tagged \ac{MU} connects to the umbrella \ac{MBS} and
		experiences full interference from the underlaid picocell tier.
	\item \ac{ES}$2$: the tagged \ac{MU} connects to the target \ac{PBS} with
		\ac{REB}, but no \ac{ICIC} strategy, such as \ac{ABS}, is carried
		out.
		As a result, the user of interest which dwells in the \ac{RE} region of
		the serving picocell is subject to high interference levels from the
		umbrella macrocell.
	\item \ac{ES}$3$: the tagged \ac{MU} connects to the target \ac{PBS} with
		\ac{REB}, and the umbrella \ac{MBS} implements the \ac{ABS} scheme with
		rate $1/2$.
	\item \ac{ES}$4$: the tagged \ac{MU} connects to the target \ac{PBS}, and
		the surrounding picocells coordinate based on the \ac{DL}-\ac{HII} bitmap
		\cite{ART:BHARUCHA-EURASIP10}, although the umbrella \ac{MBS} still
		interferes.
		This configuration is particularly relevant when the density of small
		cells nearby the tagged receiver is high, or there are multiple tiers
		of interfering small cells, such as femtocells.
	\item \ac{ES}$5$: based on the \ac{DL}-\ac{HII} bitmap, the strongest
		picocells coordinate so as to implement a virtual \ac{DAS}.
		However, the umbrella \ac{MBS} and small cells in $\mathcal{R}_2$ still
		interfere with the user of interest.
		It is assumed that the picocells coordinate through the X$2$ interface.
		However, picocells can also coordinate over the air interface for
		example using the coordination mechanism introduced in
		\cite{ART:LIMA-TWC12A}.
\end{itemize}

\subsection{Received Power from the Umbrella \ac{MBS}}
\label{SEC:MBS_CCI}
In this section, we initially derive the \ac{CF} \cite{BOOK:DALE-WILEY79,
BOOK:ABRAMOWITZ-DOVER03, ART:GHASEMI-JSAC08} of the \ac{RV} which describes the
power received from a random transmitter within $\mathcal{O}$ and thereafter
particularize it to the umbrella \ac{MBS} case.
By considering the communication model of Section \ref{SEC:DEPLOYMENT_MODEL},
we write the \ac{CF} of the power received at the tagged \ac{MU} from a random
transmitter within its observation region as follows.
\begin{proposition}
	Let $Y = R^{-\alpha} X$ be a \ac{RV} describing the power received at the
	tagged receiver from a random transmitter in $\mathcal{O}$ with $R$ varying
	from $R_m$ to $R_M$ and $X$ following the Gamma-\ac{LN} distribution of
	Section \ref{SEC:SYSTEM_MODEL}.
	Then, the \ac{CF} of $Y$ is
	\begin{align}
		\Psi_Y \left ( \omega \right ) = \frac{2}{R_M^2-R_m^2}
		\mathrm{E}_X\left [ \operatorname{R} \left( \omega \right) \right ]
		\hspace{-.25em},
		\label{EQ:INTERFERENCE_COMPONENT_CF}
	\end{align}
	{\noindent where $\operatorname{R} \left( \omega \right) = \int_{R_m}^{R_M}
	{ \exp{\left( j \omega p r^{-\alpha} x \right)} r \mathrm{d}r }$ and
	$\mathrm{E}_X\left [ \,\cdot\, \right ]$ yields the expectation of the
	enclosed expression over the \ac{RV} $X$.}
	\label{PROPOSITION:INTERFERENCE_COMPONENT_CF}
\end{proposition}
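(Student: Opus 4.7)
The plan is to unfold the definition of the characteristic function and apply the tower property, reducing the computation to a one-dimensional integral over the radial distance that isolates the shadowed-fading mark $X$ inside an outer expectation. Writing $\Psi_Y(\omega) = \operatorname{E}[e^{j\omega Y}] = \operatorname{E}[\exp(j\omega p R^{-\alpha} X)]$, I first want to identify the joint law of $(R,X)$. By the Marking theorem invoked in Section \ref{SEC:DEPLOYMENT_MODEL}, the fading mark $X$ is independent of the transmitter's position, so it suffices to specify the density of $R$ on its own.

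The next step is to extract $f_R$. Because the random transmitter is uniformly located in the annular observation region $\mathcal{O}$ of inner and outer radii $R_m,R_M$, its planar density is $[\pi(R_M^2-R_m^2)]^{-1}$. Passing to polar coordinates and integrating out the angular variable, the radial distance has PDF $f_R(r) = 2r/(R_M^2-R_m^2)$ for $r \in [R_m,R_M]$. This is the only distributional input needed.

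With $(R,X)$ independent, I would then carry out the iterated expectation by conditioning on $X$:
\[
\Psi_Y(\omega) \;=\; \operatorname{E}_X\!\left[\int_{R_m}^{R_M} \exp\!\left(j\omega p\, r^{-\alpha} X\right) \frac{2r}{R_M^2-R_m^2}\,\mathrm{d}r\right] \;=\; \frac{2}{R_M^2-R_m^2}\,\operatorname{E}_X\!\left[\operatorname{R}(\omega)\right].
\]
Swapping the expectation and the integral is justified by Fubini, since the complex exponential has modulus one and $[R_m,R_M]$ is compact, so the integrand is absolutely integrable with respect to the product measure. Pulling the deterministic prefactor outside $\operatorname{E}_X$ recovers the claimed expression.

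There is no real analytical obstacle here; the subtle point I would flag is interpretational rather than computational. The statement leaves the outer $\operatorname{E}_X[\operatorname{R}(\omega)]$ unevaluated on purpose because the composite Gamma--\ac{LN} density in \eqref{EQ:GAMA_LOGNORMAL_PDF} admits no closed-form image under this integral transform; it will later be handled via the single \ac{LN} approximation of Section \ref{SEC:PHY_MODEL} together with Gauss--Hermite quadrature, the same device already used in Proposition \ref{PROPOSITION:REB_PROBABILITY}. I would also briefly remark that the "uniform in $\mathcal{O}$" reading of a ``random transmitter'' is consistent with the \ac{PPP} framework, since conditional on a single point of $\Phi$ lying in the bounded set $\mathcal{O}$, its location is uniformly distributed on $\mathcal{O}$.
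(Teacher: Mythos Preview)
Your proposal is correct and follows essentially the same route as the paper: write $\Psi_Y(\omega)=\operatorname{E}[e^{j\omega p R^{-\alpha}X}]$ as a double integral against the joint density $f_{R,X}$, invoke the radial density $f_R(r)=2r/(R_M^2-R_m^2)$ of a uniform point in the annulus, and factor out the constant to obtain \eqref{EQ:INTERFERENCE_COMPONENT_CF}. Your explicit appeal to independence via the Marking theorem and the Fubini justification are welcome additions but do not change the argument.
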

\begin{IEEEproof}
	See Appendix \ref{PROOF:INTERFERENCE_COMPONENT_CF}.
\end{IEEEproof}
It is worthy noting that \eqref{EQ:INTERFERENCE_COMPONENT_CF} is a general
formulation which characterizes the distribution of any random transmitter
within the reception range of the tagged receiver including the umbrella
\acp{MBS} and small cells in the underlaid tier.

Thereafter, by taking the $n^{\mathrm{th}}$ derivative of the \ac{CF} as given
in \eqref{EQ:CUMULANT_PARTIAL_DERIVATIVE}, the corresponding cumulant
$\kappa_n$ is obtained.
\begin{proposition}
	Consider the \ac{CF} of the power received from a transmitter randomly
	deployed within the observation region $\mathcal{O}$; then, the
	$n^\mathrm{th}$ cumulant of $Y$ is given by
	\begin{align}
		\kappa_n &= \frac{1}{j^n} \sum_{k=0}^n g^{\left( k \right)} \left(
		\beta_0 \right) \cdot B_{n,k}\left[ \beta_1, \beta_2, \dots,
		\beta_{(n-k+1)} \right],
		\label{EQ:INTERFERENCE_COMPONENT_CUMULANT}
	\end{align}
	{\noindent where $\displaystyle g \left ( u \right ) = \ln {\left ( u
	\right ) }$, $B_{n,k}\left[ \beta_1, \beta_2, \dots, \beta_{(n - k + 1)}
	\right]$ is the partial Bell polynomial \textup{\cite{ART:BELL-TAM34}} and
	$\beta_n = j^n p^n \times \frac{R_m^{2-n \alpha } - R_M^{2-n \alpha }}{n
	\alpha - 2} \operatorname{E}_X \left[ x^n \right]$.}
	\label{PROPOSITION:INTERFERENCE_COMPONENT_CUMULANT}
\end{proposition}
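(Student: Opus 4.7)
My plan is to invoke Proposition~1, namely $\kappa_n = j^{-n}\,\partial_\omega^n \ln\Psi_Y(\omega)\big|_{\omega=0}$, and then apply Faà di Bruno's formula in its Bell-polynomial form to the composition $g\circ\Psi_Y$ with $g(u)=\ln u$. To align the derivatives with the $\beta_k$ stated, I would first peel off the normalization by writing $\Psi_Y(\omega)=\tfrac{2}{R_M^2-R_m^2}\,\widetilde{\Psi}_Y(\omega)$, where $\widetilde{\Psi}_Y(\omega)=\operatorname{E}_X[\operatorname{R}(\omega)]$. Because $\ln\Psi_Y$ and $\ln\widetilde{\Psi}_Y$ differ only by an $\omega$-independent constant, their $\omega$-derivatives of every order $n\ge 1$ coincide, so one may replace $\Psi_Y$ by $\widetilde{\Psi}_Y$ throughout the differentiation.

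The second step is to identify $\widetilde{\Psi}_Y^{(k)}(0)$ with $\beta_k$. Interchanging the $\omega$-derivative with both the expectation $\operatorname{E}_X$ and the integral over $r\in[R_m,R_M]$, a direct computation yields
\begin{align*}
\widetilde{\Psi}_Y^{(k)}(0)=\operatorname{E}_X\!\left[\int_{R_m}^{R_M}(jpr^{-\alpha}x)^{k}\,r\,\mathrm{d}r\right]=j^{k}p^{k}\operatorname{E}_X[x^{k}]\,\frac{R_m^{2-k\alpha}-R_M^{2-k\alpha}}{k\alpha-2}=\beta_k,
\end{align*}
which already covers the case $k=0$ since $\beta_0=(R_M^2-R_m^2)/2=\widetilde{\Psi}_Y(0)$. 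Substituting these moments into Faà di Bruno,
\begin{align*}
\left.\frac{d^n}{d\omega^n}\ln\widetilde{\Psi}_Y(\omega)\right|_{\omega=0}=\sum_{k=0}^{n} g^{(k)}(\beta_0)\,B_{n,k}[\beta_1,\beta_2,\ldots,\beta_{n-k+1}],
\end{align*}
and multiplying by $j^{-n}$ in accordance with \eqref{EQ:CUMULANT_PARTIAL_DERIVATIVE} recovers \eqref{EQ:INTERFERENCE_COMPONENT_CUMULANT}.

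The principal technical hurdle is the rigorous justification of the triple interchange (differentiation, Lebesgue integral on the annulus, and expectation over $X$). This is delivered by the Dominated Convergence Theorem: the annulus $[R_m,R_M]$ is bounded away from $0$ and $\infty$, and the Gamma--\ac{LN} composite distribution has finite moments of every order, so on any neighborhood of $\omega=0$ the $\omega$-derivatives of the integrand are uniformly dominated by an $X$-integrable function. The degenerate case $k\alpha=2$ is excluded for the usual propagation exponents $\alpha\in(2,6]$ considered in the paper; otherwise it would require a separate limiting argument producing a $\ln(R_M/R_m)$ factor in lieu of the rational one. What remains is purely combinatorial bookkeeping of the partial Bell coefficients, which is standard.
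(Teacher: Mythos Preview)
Your proposal is correct and follows essentially the same route as the paper: define the unnormalized characteristic function (your $\widetilde{\Psi}_Y$, the paper's auxiliary $f$), apply Fa\`a di Bruno in Bell-polynomial form to $\ln\circ f$, differentiate under the integrals to identify $f^{(k)}(0)=\beta_k$, and read off the cumulants via \eqref{EQ:CUMULANT_PARTIAL_DERIVATIVE}. Your treatment is in fact slightly more careful than the paper's, since you explicitly justify dropping the normalization constant $2/(R_M^2-R_m^2)$ (the paper silently passes from $\Psi_Y$ to the unnormalized $f$) and you flag the DCT and the degenerate case $k\alpha=2$, neither of which the paper addresses.
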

\begin{IEEEproof}
	See Appendix \ref{PROOF:INTERFERENCE_COMPONENT_CUMULANT}.
\end{IEEEproof}


\subsection{Aggregate \ac{CCI} from the Underlaid Tier of Small Cells}
\label{SEC:PBS_CCI}
This scenario represents our default configuration in which the umbrella
\ac{MBS} serves the tagged receiver, whereas the underlaid picocell tier is the
only source of interference.
In order to characterize the distribution of the aggregate \ac{CCI} at the
tagged receiver, we use the cumulant-based framework with the \ac{MPP}
$\widetilde{\Phi}$ \cite{ART:GHASEMI-JSAC08, ART:LIMA-TWC12A}.
By applying Campbell's theorem to \eqref{EQ:MPP}, we derive its characteristic
functional \cite{BOOK:KINGMAN-OXFORD93} as given next.
\begin{proposition}
	Consider the \ac{ES}$1$; then, the $n^\mathrm{th}$ cumulant of the
	aggregate \ac{CCI} perceived by the tagged \ac{MU} within $\mathcal{O}$ and
	with respect to $\widetilde{\Phi}$ is given by,
	\begin{align}
		&\kappa_{n}\negthickspace \left( \widetilde{\Phi} \right) =\frac{2 \pi
		\lambda\,p^{n}}{n \alpha -2} \left(R_m^{2-\alpha
		n}\negthickspace-R_M^{2-\alpha  n}\right)
		\operatorname{E}_X^{n}\negmedspace\left [ 0, \infty \right ].
		\label{EQ:CUMULANT_PBS_FULL_INTERFERENCE}
	\end{align}
	\label{PROPOSITION:CUMULANT_PBS_FULL_INTERFERENCE}
\end{proposition}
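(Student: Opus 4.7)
The plan is to compute the cumulants directly from the logarithm of the characteristic functional of $\widetilde{\Phi}$, which Campbell's theorem makes explicit for a marked PPP, and then evaluate the resulting double integral in closed form.

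First, I would write down the characteristic function of $Z=\sum_{(\varphi,x)\in\widetilde{\Phi}} p r^{-\alpha} x$. Because $\widetilde{\Phi}$ is a marked PPP of intensity $\lambda$ on $\mathbb{R}^2$ with i.i.d. marks of density $f_X$ (as established in Section~\ref{SEC:DEPLOYMENT_MODEL} via the Marking theorem), Campbell's theorem yields
\begin{align*}
\ln \Psi_Z(\omega) \;=\; 2\pi\lambda \int_{R_m}^{R_M}\!\!\int_0^{\infty} \bigl(e^{j\omega p r^{-\alpha} x}-1\bigr)\, f_X(x)\,\mathrm{d}x\, r\,\mathrm{d}r,
\end{align*}
the annular form of the spatial integral reflecting the observation region $\mathcal{O}$ of Definition~\ref{DEF:OBSERVATION_REGION}.

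Next I would invoke Proposition about cumulants to write $\kappa_n(\widetilde{\Phi}) = j^{-n}\,[\partial_\omega^n \ln\Psi_Z(\omega)]_{\omega=0}$. Differentiating $n\ge 1$ times under the integral sign (justified because the integrand, once differentiated, is bounded by the integrable envelope $(pr^{-\alpha}x)^n$ on the compact annulus for fixed $x$ and by a finite moment in $x$) kills the constant $-1$ term and brings down a factor $(j p r^{-\alpha} x)^n$. Evaluation at $\omega=0$ removes the exponential, so that
\begin{align*}
\kappa_n(\widetilde{\Phi}) \;=\; 2\pi\lambda\, p^n \int_{R_m}^{R_M} r^{1-n\alpha}\,\mathrm{d}r \;\cdot\; \int_0^\infty x^n f_X(x)\,\mathrm{d}x.
\end{align*}

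The remaining steps are routine: the radial integral evaluates to $(R_m^{2-n\alpha}-R_M^{2-n\alpha})/(n\alpha-2)$, and the mark integral is precisely the partial moment $\operatorname{E}_X^n[0,\infty]$ of Definition~\ref{DEF:PARTIAL_MOMENT}. Recombining the two factors produces the claimed expression for $\kappa_n(\widetilde{\Phi})$. The main obstacle is conceptual rather than computational: one must set up Campbell's theorem on the product space $\mathbb{R}^2\times\mathbb{R}^+$ correctly, using the independence of marks to factorize the spatial and fading integrations, and be careful that the formula is only meaningful for $n\alpha>2$ so that the interference moments exist under the unbounded path-loss model $l(r)=r^{-\alpha}$ — the restriction to the annulus $R_m>0$ is what keeps every cumulant finite in the present statement.
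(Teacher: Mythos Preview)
Your proof is correct and follows essentially the same route as the paper: apply Campbell's theorem to obtain $\ln\Psi_Z(\omega)$ as a double integral over the annulus and the mark space, differentiate $n$ times and set $\omega=0$, then evaluate the radial integral and identify the remaining fading integral as the partial moment $\operatorname{E}_X^n[0,\infty]$. Your added remarks on differentiating under the integral sign and on the role of $R_m>0$ for finiteness are welcome but not required by the paper's own argument.
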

\begin{IEEEproof}
	See Appendix \ref{PROOF:CUMULANT_PBS_FULL_INTERFERENCE}.
\end{IEEEproof}
The aggregate \ac{CCI} from the underlaid tier of picocells is computed with
respect to a limited region of the total field of interfering nodes (from $R_m$
to $R_M$).
To account for the neglected interference parcel beyond $R_M$, one needs to
change in ($32$) the upper limit of integration with respect to $r$ until
$\infty$.
For instance, considering $\alpha = 3$, $R_m = 5$m and $R_M = 250$m, the
aggregate interference from the region beyond $R_M = 250$m (towards $\infty$)
represents only $2\%$ of the aggregate interference, whereas for an observation
region within $R_m = 25$m and $R_M = 500$m the neglected region contributes
with $5\%$ of the total interference.

\subsection{Aggregate \ac{CCI} from Multiple Tiers}
The additivity property of cumulants is used to compute the aggregate \ac{CCI}
in heterogeneous scenarios with multiple tiers \cite{BOOK:KENDALL-GRIFFIN45}.
In order to apply this property, the interference components are assumed to be
independent.
\begin{proposition}
	Consider the two tier deployment scenario where an umbrella \ac{MBS} is
	underlaid with self-organizing small cells; then, the $n^\mathrm{th}$
	cumulant of the aggregate \ac{CCI} perceived by the tagged \ac{MU} in
	$\mathcal{O}$ is,
	\begin{align}
		\kappa_{n} = \kappa_{n}^\mathrm{M} + \kappa_{n}^\mathrm{P}.
		\label{EQ:CUMULANT_PBS_PLUS_MBS}
	\end{align}
	\label{THEOREM:CUMULANT_PBS_PLUS_MBS}
\end{proposition}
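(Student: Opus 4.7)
The plan is to leverage the characteristic function machinery already established in Section~\ref{SEC:ANALYTICAL_FRAMEWORK} together with the standard additivity property of cumulants under independence. Concretely, let $Z^{\mathrm{M}}$ denote the power received at the tagged MU from the umbrella MBS (whose cumulants $\kappa_n^{\mathrm{M}}$ are obtained from Proposition~\ref{PROPOSITION:INTERFERENCE_COMPONENT_CUMULANT}) and let $Z^{\mathrm{P}} = \sum_{(\varphi,x)\in\widetilde{\Phi}} p\, l(r)\,x$ be the aggregate CCI from the underlaid picocell tier (whose cumulants $\kappa_n^{\mathrm{P}}$ are given in Proposition~\ref{PROPOSITION:CUMULANT_PBS_FULL_INTERFERENCE}). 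The total interference at the tagged receiver is $Z = Z^{\mathrm{M}} + Z^{\mathrm{P}}$, and the claim reduces to showing $\kappa_n(Z)=\kappa_n(Z^{\mathrm{M}})+\kappa_n(Z^{\mathrm{P}})$.

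First, I would invoke the independence assumption stated in the proposition: the location and shadowed fading governing the link from the umbrella MBS are statistically independent of the marks and points of the MPP $\widetilde{\Phi}$ that drives the picocell contribution. This independence follows from the modelling choices in Section~\ref{SEC:SYSTEM_MODEL}, in particular that fading is i.i.d.\ across distinct network entities and that the picocell PPP is independent of the macro layout. Consequently the joint CF factorises, so by Definition~\ref{EQ:CHARACTERISTIC_FUNCTION}
\begin{align}
\Psi_Z(\omega)=\operatorname{E}\!\left[e^{j\omega(Z^{\mathrm{M}}+Z^{\mathrm{P}})}\right]=\Psi_{Z^{\mathrm{M}}}(\omega)\,\Psi_{Z^{\mathrm{P}}}(\omega).
\end{align}

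Next, I would take logarithms to obtain $\ln\Psi_Z(\omega)=\ln\Psi_{Z^{\mathrm{M}}}(\omega)+\ln\Psi_{Z^{\mathrm{P}}}(\omega)$, differentiate $n$ times with respect to $\omega$, evaluate at $\omega=0$, and divide by $j^n$. Applying equation~\eqref{EQ:CUMULANT_PARTIAL_DERIVATIVE} term by term yields $\kappa_n(Z)=\kappa_n(Z^{\mathrm{M}})+\kappa_n(Z^{\mathrm{P}})=\kappa_n^{\mathrm{M}}+\kappa_n^{\mathrm{P}}$, which is exactly \eqref{EQ:CUMULANT_PBS_PLUS_MBS}. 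Differentiability up to order $n$ is inherited from the corresponding hypothesis in the per-tier Propositions (finite $n$-th moments), so no additional regularity check is required.

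There is no real technical obstacle here; the argument is essentially a one-line consequence of the log-CF being additive under independence. The only care needed is to make the independence between the macro component and the marked PPP explicit, and to note that the result extends immediately by induction to any finite number of mutually independent interfering tiers, which is the form in which additivity will be invoked in later sections (e.g.\ when combining $\widetilde{\Phi}_1$ and $\widetilde{\Phi}_2$ contributions for ES$4$ and ES$5$).
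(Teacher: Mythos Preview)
Your argument is correct and is exactly the approach the paper takes: it simply invokes independence of the two tiers and the additivity of cumulants, of which your CF-factorisation computation is the standard justification. The paper's own proof is the one-line version of what you wrote, so there is nothing to add.
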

\begin{proof}
	Since the interference components from both tiers are independent of each
	other, we can use the cumulants additivity property to obtain
	\eqref{EQ:CUMULANT_PBS_PLUS_MBS}.
\end{proof}

\subsection{Aggregate \ac{CCI} with \acl{ICIC}}
\label{SEC:PBS+RE+ABS+DLHII}
As shown in Section \ref{SEC:DL_HII}, picocells use the \ac{DL}-\ac{HII} bitmap
to self-organize into two coordination regions $\mathcal{R}_1$ and
$\mathcal{R}_2$.
Herein, we derive the cumulants of the aggregate \ac{CCI} generated by each
such region with respect to the tagged receiver.
Small cells that are detected by the tagged receiver within $\mathcal{R}_1$
decrease their transmit power by a predefined value, \textit{i.e.}, $p^{\prime}
= p - \Delta p$ so as to reduce their interference towards the user of
interest.
In the following proposition, the cumulants of the dominant interfering
picocells belonging to \ac{MPP} $\widetilde{\Phi}_{1}$ are identified.
\begin{proposition}
	Consider the network operation of Section \ref{SEC:DL_HII}; then, the
	$n^\mathrm{th}$ cumulant of the aggregate \ac{CCI} perceived by the tagged
	\ac{MU} in $\mathcal{O}$ with respect to $\widetilde{\Phi}_{1}$ is written
	as,
	\begin{align}
		\kappa_{n} \negthickspace \left( \widetilde{\Phi}_{1} \right) =
		\negmedspace\frac{2 \pi \lambda\,(p^{\prime})^n}{n\alpha - 2}
		&\bigg\{\hspace{-.5em} \left(R_m^{2-\alpha n}-R_M^{2-\alpha  n}\right)
		\operatorname{E}_X^n{\left [ {\varrho}_M, \infty \right ] } -
		{\varrho}_{th}^{n-\frac{2}{\alpha }}
		\operatorname{E}_X^\frac{2}{\alpha}{\left [ {\varrho}_m, {\varrho}_M
		\right ] } \bigg. \nonumber \\
		& \bigg. + R_m^{2-n \alpha } \operatorname{E}_X^n{\left [ {\varrho}_m,
		{\varrho}_M \right ] } \bigg\}.
		\label{EQ:CUMULANT_R1}
	\end{align}
	\label{PROPOSITION:CUMULANT_R1}
\end{proposition}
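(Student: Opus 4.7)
The plan is to extend the shot-noise cumulant computation of Proposition \ref{PROPOSITION:CUMULANT_PBS_FULL_INTERFERENCE} by restricting Campbell's theorem from the full MPP $\widetilde{\Phi}$ to the thinned subprocess $\widetilde{\Phi}_1$ of detected dominant interferers defined by the indicator in \eqref{EQ:AGGREGATE_CCI_DLHII}. First I would rewrite the detection event $p_b\, r^{-\alpha} x \geq \rho_{\mathrm{th}}$ as $r \leq (x/\varrho_{\mathrm{th}})^{1/\alpha}$ with $\varrho_{\mathrm{th}} = \rho_{\mathrm{th}}/p_b$, and intersect it with the annulus $R_m \leq r \leq R_M$. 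This naturally identifies two fading thresholds, $\varrho_m := \varrho_{\mathrm{th}} R_m^{\alpha}$ and $\varrho_M := \varrho_{\mathrm{th}} R_M^{\alpha}$, corresponding respectively to the fading required for a node located at $R_m$ (resp.\ $R_M$) to become detectable.

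Next, Campbell's theorem together with the Poisson shot-noise cumulant formula, after polar conversion, yields
\begin{align*}
\kappa_n\!\left(\widetilde{\Phi}_1\right) = 2\pi\lambda (p^\prime)^n \int_{\varrho_m}^{\infty} x^n f_X(x) \int_{R_m}^{r_\star(x)} r^{1-n\alpha}\, \mathrm{d}r\, \mathrm{d}x,
\end{align*}
where $r_\star(x) = \min\{R_M,\,(x/\varrho_{\mathrm{th}})^{1/\alpha}\}$, the factor $p^\prime = p - \Delta p$ reflects the announced power backoff of detected interferers, and the outer lower limit $\varrho_m$ arises because no point of $\mathcal{O}$ is detectable for smaller fadings. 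Evaluating the inner integral gives $(R_m^{2-n\alpha} - r_\star^{2-n\alpha})/(n\alpha-2)$, and I would split the outer integration at $x = \varrho_M$: on $[\varrho_M, \infty)$ the truncation $r_\star = R_M$ is in force, whereas on $[\varrho_m, \varrho_M]$ the detection-boundary value $r_\star = (x/\varrho_{\mathrm{th}})^{1/\alpha}$ is active.

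The first interval directly produces the clean term $(R_m^{2-\alpha n} - R_M^{2-\alpha n})\operatorname{E}_X^n[\varrho_M, \infty]$ of the stated expression. On the second interval, the $R_m^{2-n\alpha}$ contribution factors out to give $R_m^{2-n\alpha}\operatorname{E}_X^n[\varrho_m, \varrho_M]$, while the $r_\star^{2-n\alpha}$ piece requires the key algebraic identity $x^n \cdot (x/\varrho_{\mathrm{th}})^{(2-n\alpha)/\alpha} = \varrho_{\mathrm{th}}^{\,n-2/\alpha}\, x^{2/\alpha}$, which by Definition \ref{DEF:PARTIAL_MOMENT} collapses into $-\varrho_{\mathrm{th}}^{\,n-2/\alpha}\operatorname{E}_X^{2/\alpha}[\varrho_m, \varrho_M]$ after integration against $f_X$. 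Pulling out the common prefactor $2\pi\lambda(p^\prime)^n/(n\alpha - 2)$ then reassembles \eqref{EQ:CUMULANT_R1}. The main obstacle is the careful bookkeeping of the three-piece domain and, in particular, tracking how the non-integer exponent $2/\alpha$ emerges inside a partial moment, a feature absent from Proposition \ref{PROPOSITION:CUMULANT_PBS_FULL_INTERFERENCE} that distinguishes the thinned-process computation from the full one.
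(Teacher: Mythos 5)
Your proposal is correct and follows essentially the same route as the paper's proof: write the Campbell/shot-noise cumulant integral for the thinned process with the detection indicator, convert the detection event into the radial cutoff $\min\bigl[R_M,(x/\varrho_{th})^{1/\alpha}\bigr]$, split the fading domain at $\varrho_m=\varrho_{th}R_m^{\alpha}$ and $\varrho_M=\varrho_{th}R_M^{\alpha}$, integrate over $r$, and use the identity $x^n\,(x/\varrho_{th})^{(2-n\alpha)/\alpha}=\varrho_{th}^{\,n-2/\alpha}x^{2/\alpha}$ to produce the $\operatorname{E}_X^{2/\alpha}$ partial moment. The only cosmetic difference is that the paper first writes the full characteristic functional and differentiates, and then additionally evaluates the partial moments in closed form via Q-functions, which is not needed for the statement as written.
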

\begin{IEEEproof}
	See Appendix \ref{PROOF:CUMULANT_R1}.
\end{IEEEproof}

During the coordination mechanism, the tagged receiver does not detect the
picocells within $\mathcal{R}_2$ which contribute to the aggregate interference
with transmit power $p\,\mathrm{dBm}$.
Therefore, the distribution of the remaining interference is characterized by
the following cumulants.
\begin{proposition}
	Consider the network operation of Section \ref{SEC:DL_HII}; then, the
	$n^\mathrm{th}$ cumulant of the aggregate \ac{CCI} perceived by the tagged
	\ac{MU} in $\mathcal{O}$ with respect to $\widetilde{\Phi}_{2}$ has the
	following form,
	\begin{align}
		\kappa_{n}\negthickspace \left( \widetilde{\Phi}_{2}
		\right)=\negmedspace\frac{2 \pi  \lambda\,p^n}{n\alpha -
		2}&\bigg\{\hspace{-.5em} \left(R_m^{2-\alpha n}-R_M^{2-\alpha n}\right)
		\operatorname{E}_X^n{\left [ -\infty, {\varrho}_m \right ]} +
		{\varrho}_{{th}}^{n-\frac{2}{\alpha} }
		\operatorname{E}_X^\frac{2}{\alpha} {\left [ {\varrho}_m, {\varrho}_M
		\right ]} \bigg. \nonumber \\
		& \bigg. - R_M^{2-n \alpha } \operatorname{E}_X^n{\left [ {\varrho}_m,
		{\varrho}_M \right ]} \bigg\}.
		\label{EQ:CUMULANT_R2}
	\end{align}
	\label{PROPOSITION:CUMULANT_R2}
\end{proposition}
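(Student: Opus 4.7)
The plan is to mirror the proof of Proposition~\ref{PROPOSITION:CUMULANT_R1}, using the Campbell-theorem-plus-cumulant-derivative route but with the \ac{DL}-\ac{HII} indicator reversed. Nodes in $\widetilde{\Phi}_2$ are precisely those that escape detection, i.e.\ they satisfy $p_b r^{-\alpha} x < \rho_{\mathrm{th}}$, or equivalently $x < \varrho_{\mathrm{th}} r^{\alpha}$ with $\varrho_{\mathrm{th}} = \rho_{\mathrm{th}}/p_b$, and they transmit at full power $p$ rather than the throttled value $p^{\prime}$ that applies to $\widetilde{\Phi}_1$.

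First I would apply Campbell's theorem to the \ac{MPP} $\widetilde{\Phi}_2$ to obtain the log--characteristic functional, then invoke \eqref{EQ:CUMULANT_PARTIAL_DERIVATIVE} to extract the $n^{\text{th}}$ cumulant. After differentiating $n$ times and setting $\omega=0$ the factor $j^n$ cancels and I arrive at
\begin{align}
\kappa_n \negthickspace \left( \widetilde{\Phi}_2 \right) = 2\pi\lambda\, p^n \int_{R_m}^{R_M} r^{\,1-n\alpha} \int_0^{\infty} x^n\, \mathds{1}_{\{x < \varrho_{\mathrm{th}} r^{\alpha}\}}\, f_X(x)\, \mathrm{d}x\, \mathrm{d}r,
\end{align}
which differs from its $\widetilde{\Phi}_1$ analogue only by the opposite inequality inside the indicator.

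Next I would apply Fubini to swap the order of integration. Partitioning on $x$ produces three slabs: for $x \in [0,\varrho_m]$ every $r \in [R_m, R_M]$ is admissible; for $x \in [\varrho_m, \varrho_M]$ admissibility reduces to $r \in [(x/\varrho_{\mathrm{th}})^{1/\alpha}, R_M]$; for $x > \varrho_M$ the slab is empty. Evaluating the elementary inner integral $\int r^{1-n\alpha}\mathrm{d}r$ on each admissible interval and pulling out the common factor $1/(n\alpha-2)$ yields the three summands of the claim: the first slab gives $(R_m^{2-n\alpha} - R_M^{2-n\alpha})\operatorname{E}_X^n[-\infty,\varrho_m]$; the upper endpoint of the middle slab contributes $-R_M^{2-n\alpha}\operatorname{E}_X^n[\varrho_m,\varrho_M]$; and its lower endpoint $(x/\varrho_{\mathrm{th}})^{1/\alpha}$ produces a factor $x^{2/\alpha - n}$ which, combined with the $x^n$ already in the integrand, collapses to $\varrho_{\mathrm{th}}^{\,n-2/\alpha}\operatorname{E}_X^{2/\alpha}[\varrho_m,\varrho_M]$ once the partial-moment notation of Definition~3 is applied.

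As a cross-check, since $\mathcal{R}_1$ and $\mathcal{R}_2$ are disjoint, the Superposition theorem together with the additivity of cumulants used in Proposition~\ref{THEOREM:CUMULANT_PBS_PLUS_MBS} gives $\kappa_n(\widetilde{\Phi}_2) = \kappa_n(\widetilde{\Phi}) - \kappa_n(\widetilde{\Phi}_1)\big|_{p^{\prime} \to p}$; subtracting the $\widetilde{\Phi}_1$ formula (with the power substitution) from Proposition~\ref{PROPOSITION:CUMULANT_PBS_FULL_INTERFERENCE} and using $\operatorname{E}_X^n[0,\infty] = \operatorname{E}_X^n[0,\varrho_m] + \operatorname{E}_X^n[\varrho_m,\varrho_M] + \operatorname{E}_X^n[\varrho_M,\infty]$ recovers exactly the stated expression. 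The only real obstacle is the bookkeeping of the slab decomposition and the sign tracking of the $r^{2-n\alpha}$ differences; once the $(r,x)$-region is inverted correctly, every remaining step is routine integration.
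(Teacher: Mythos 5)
Your proposal is correct and follows essentially the same route as the paper's proof: Campbell's theorem applied to $\widetilde{\Phi}_2$ with the complementary detection indicator, extraction of the $n^{\text{th}}$ cumulant, partition of the $(r,x)$ integration region into the slab $x\in[0,\varrho_m]$ with full $r$-range and the slab $x\in[\varrho_m,\varrho_M]$ with $r$ running from $(x/\varrho_{\mathrm{th}})^{1/\alpha}$ to $R_M$ (the slab $x>\varrho_M$ being empty), followed by the elementary $r$-integration and the partial-moment identifications. The additivity cross-check $\kappa_n(\widetilde{\Phi})=\kappa_n(\widetilde{\Phi}_1)\big|_{p'\to p}+\kappa_n(\widetilde{\Phi}_2)$ is a sensible extra sanity check not present in the paper, but the core argument is the same.
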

\begin{IEEEproof}
	See Appendix \ref{PROOF:CUMULANT_R2}.
\end{IEEEproof}

\section{Performance Analysis}
\label{SEC:PERFORMANCE_ANALYSIS}
With regard to the tagged receiver, the performance of the evaluation scenarios
is assessed by means of the resulting outage probability and average channel
capacity.
The scenarios under study are interference limited and hence the thermal noise
is negligible in comparison to the resulting \ac{CCI} \cite{ART:WEBER-TC10}.

\subsection{\acs{SIR} and Outage Probability}
\label{SEC:SIR}
The outage probability is given by $\Pr\left [ \Gamma < \gamma_\mathrm{th}
\right ]$ where the \ac{RV} $\Gamma$ represents the \ac{SIR} distribution of
the tagged receiver, and $\gamma_{\mathrm{th}}$ is the corresponding \ac{SIR}
detection threshold.
\begin{theorem}
	Let $V_{0}$ and $V$ be Normal \acp{RV} (in logarithmic scale) representing
	the power received from the desired transmitter and the aggregate \ac{CCI}
	at the tagged receiver, respectively.
	Under the assumption of the shadowed fading with composite Gamma-\ac{LN}
	distribution, the \ac{SIR} at the tagged receiver is 
	\begin{align}
		\Gamma \thicksim \mathsf{Normal} \left( \mu_{V_{0}}-\mu_{V},
		\sigma^{2}_{V_{0}} + \sigma^{2}_{V} \right),
		\label{EQ:SIR}
	\end{align}
	and the outage probability is given by 
	\begin{align}
		\Pr\left [ \Gamma < \gamma_{\mathrm{th}} \right ] = \text{Q}\left[
		\left( \mu_{\Gamma} - \gamma_{\mathrm{th}} \right)/\sigma_{\Gamma}
		\right],
		\label{EQ:OUTAGE_PROBABILITY}
	\end{align}
	where $ \mu_{\Gamma}=\mu_{V_{0}}-\mu_{V}$ and $
	\sigma_{\Gamma}=\sqrt{\sigma^{2}_{V_{0}} + \sigma^{2}_{V}}$.
	\label{THEOREM:OUTAGE_PROBABILITY}
	\end{theorem}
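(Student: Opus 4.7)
The plan is to exploit the two Log-Normal approximations established earlier in the paper and then reduce the outage probability to a standard tail of a Normal random variable. First, by the approximation of Ho \textit{et al.} cited in Section \ref{SEC:PHY_MODEL}, the composite Gamma--Log-Normal distribution that governs the shadowed fading is collapsed to a single Log-Normal, so the power received at the tagged receiver from the desired transmitter is Log-Normal; equivalently, its decibel value $V_{0}$ is Normal with mean $\mu_{V_{0}}$ and variance $\sigma^{2}_{V_{0}}$. Second, by the cumulant-matching rule in \eqref{EQ:LOGNORMAL_PARAMETERS}, the aggregate \ac{CCI} is likewise approximated by a Log-Normal, whose decibel value $V$ is Normal with mean $\mu_{V}$ and variance $\sigma^{2}_{V}$ obtained from $\kappa_{1}$ and $\kappa_{2}$ of the relevant proposition (e.g.\ Proposition \ref{PROPOSITION:CUMULANT_PBS_FULL_INTERFERENCE} or its refinements for each \ac{ES}).

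Next I would work entirely in the logarithmic (dB) domain, where the \ac{SIR} is the difference $\Gamma = V_{0} - V$. The desired signal depends on the serving transmitter, while the aggregate \ac{CCI} is the sum of contributions from the interfering point process $\widetilde{\Phi}$ (possibly thinned into $\widetilde{\Phi}_{1},\widetilde{\Phi}_{2}$). These two sources are disjoint by construction, and the fading/shadowing marks were already assumed \ac{iid} across network entities in Section \ref{SEC:PHY_MODEL}. Invoking this independence, the difference of two independent Normal variates is Normal, yielding
\begin{align*}
\Gamma \thicksim \mathsf{Normal}\bigl(\mu_{V_{0}}-\mu_{V},\,\sigma^{2}_{V_{0}}+\sigma^{2}_{V}\bigr),
\end{align*}
which is exactly \eqref{EQ:SIR}.

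Finally, the outage probability follows by standardization. Writing
\begin{align*}
\Pr\bigl[\Gamma<\gamma_{\mathrm{th}}\bigr]=\Pr\!\left[\frac{\Gamma-\mu_{\Gamma}}{\sigma_{\Gamma}}<\frac{\gamma_{\mathrm{th}}-\mu_{\Gamma}}{\sigma_{\Gamma}}\right]=1-\mathrm{Q}\!\left(\frac{\gamma_{\mathrm{th}}-\mu_{\Gamma}}{\sigma_{\Gamma}}\right),
\end{align*}
and using the elementary identity $1-\mathrm{Q}(x)=\mathrm{Q}(-x)$ delivers \eqref{EQ:OUTAGE_PROBABILITY} with $\mu_{\Gamma}=\mu_{V_{0}}-\mu_{V}$ and $\sigma_{\Gamma}=\sqrt{\sigma^{2}_{V_{0}}+\sigma^{2}_{V}}$.

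The conceptually delicate step is not the Normal-minus-Normal algebra, which is routine, but rather the justification that both $V_{0}$ and $V$ are \emph{well} approximated as Normal in the dB domain. For $V_{0}$ this rests on the single-Log-Normal fit to the composite Gamma--\ac{LN}, and for $V$ on the cumulant-matching approximation, whose accuracy has to be inherited from the discussion in Section \ref{SEC:ANALYTICAL_FRAMEWORK}; the independence of $V_{0}$ and $V$ likewise relies on the site-wise independence of the marks already stipulated in the channel model. Once these modeling assumptions are invoked, the proof is essentially a one-line linear combination of Gaussians followed by the definition of the Q-function.
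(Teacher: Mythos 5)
Your proof is correct and follows essentially the same route as the paper: the paper invokes the multiplicative reproductive property of the quotient of two independent log-normal RVs, which is precisely your difference-of-independent-Gaussians argument transported to the logarithmic domain, followed by the same standardization to the Q-function. Your version is simply more explicit about where the log-normality of $V_0$ and $V$ comes from (the Ho \textit{et al.} fit and the cumulant matching in \eqref{EQ:LOGNORMAL_PARAMETERS}), which the paper leaves implicit.
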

\begin{IEEEproof}
	The \ac{SIR} distribution is given by the quotient of two independent
	\ac{LN} \acp{RV}, namely, $e^{V_0}$ which is the received power from the
	target transmitter, and $e^{V}$ which is an equivalent \ac{LN} \ac{RV}
	approximating the aggregate \ac{CCI} at the tagged receiver.
	Hence, the multiplicative reproductive property of \ac{LN} \acp{RV} is
	applied to obtain the \ac{SIR} distribution \cite{BOOK:DALE-WILEY79}.
\end{IEEEproof}

\subsection{Average Spectral Efficiency}
\label{SEC:AVERAGE_SPECTRAL_EFFICIENCY}
We evaluate how the two-tier coexistence scenarios perform in terms of the
location-dependent average channel capacity of the tagged receiver
\cite{ART:LEE-TVT90}.
By using the analytical framework previously established, and assuming that all
users are allocated on the same bandwidth $W$, we initially recover the
\ac{SIR} distribution of the tagged receiver, and then compute the
corresponding capacity.
%
%
\begin{theorem}
	Under the assumption of the shadowed fading channel regime, the average
	channel capacity of the tagged receiver is given as,
	\begin{align}
		\bar{C} \simeq W \sum_{k=1}^{K}{\frac{\omega_k}{\sqrt{\pi}} \log_2
		{\left [ 1+ \exp{\left ( \frac{\eta_k \sqrt{2} \sigma + \mu}{\xi}
		\right)}\right ]}}.
		\label{EQ:GAUSS_HERMITE_APPROX}
		\end{align}
	\label{THEOREM:ASE}
\end{theorem}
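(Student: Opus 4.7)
\begin{IEEEproof}[Proof sketch]
The plan is to express the average spectral efficiency as the expectation of the Shannon rate with respect to the SIR distribution, substitute the log-normal characterization established in Theorem \ref{THEOREM:OUTAGE_PROBABILITY}, and then discretize the resulting Gaussian-weighted integral by Gauss--Hermite quadrature. The starting point is
\[
\bar{C}=W\,\operatorname{E}\!\left[\log_{2}\!\left(1+\Gamma_{\mathrm{lin}}\right)\right],
\]
where $\Gamma_{\mathrm{lin}}$ denotes the linear-scale SIR at the tagged receiver and $W$ the operating bandwidth shared by all users.

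Next, I would invoke Theorem \ref{THEOREM:OUTAGE_PROBABILITY}, which guarantees that the logarithmic SIR is Normally distributed with mean $\mu=\mu_{V_{0}}-\mu_{V}$ and variance $\sigma^{2}=\sigma^{2}_{V_{0}}+\sigma^{2}_{V}$, the parameters being inherited from the \ac{LN} approximation in \eqref{EQ:LOGNORMAL_PARAMETERS}. Expressing the linear SIR in terms of this underlying Gaussian $V$ through the $\xi=\ln(10)/10$ conversion of Section \ref{SEC:PHY_MODEL}, the expectation becomes the one-dimensional integral
\begin{align*}
\bar{C}=\frac{W}{\sqrt{2\pi}\,\sigma}\int_{-\infty}^{\infty}\log_{2}\!\left[1+\exp\!\left(\tfrac{v}{\xi}\right)\right]\exp\!\left(-\tfrac{(v-\mu)^{2}}{2\sigma^{2}}\right)\mathrm{d}v.
\end{align*}

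The final step is a change of variable tailored to Gauss--Hermite quadrature: setting $\eta=(v-\mu)/(\sqrt{2}\,\sigma)$ so that $\mathrm{d}v=\sqrt{2}\,\sigma\,\mathrm{d}\eta$, the integrand is recast in the canonical weighted form
\[
\bar{C}=\frac{W}{\sqrt{\pi}}\int_{-\infty}^{\infty}\log_{2}\!\left[1+\exp\!\left(\tfrac{\sqrt{2}\,\sigma\,\eta+\mu}{\xi}\right)\right]e^{-\eta^{2}}\mathrm{d}\eta.
\]
Applying the $K$-point Gauss--Hermite rule $\int_{-\infty}^{\infty}f(\eta)\,e^{-\eta^{2}}\mathrm{d}\eta\approx\sum_{k=1}^{K}\omega_{k}f(\eta_{k})$ with $\eta_{k}$ the zeros of the Hermite polynomial $H_{K}(\eta)$ and $\omega_{k}$ the associated weights then delivers exactly the right-hand side of \eqref{EQ:GAUSS_HERMITE_APPROX}.

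The essentially routine obstacle is keeping the dB-to-neper conversion consistent (the correct placement of $\xi$ between the Normal parameters $(\mu,\sigma)$ extracted from the cumulants and the exponent inside $\log_{2}[1+\cdot]$); once that is fixed, there is no further analytic difficulty beyond recognising the integral as the one approximated by Gauss--Hermite quadrature, whose approximation error can be bounded by standard arguments provided $K$ is chosen large enough so that the tails of $\log_{2}[1+e^{(\cdot)/\xi}]$ grow only polynomially in $\eta$ relative to $e^{-\eta^{2}}$.
\end{IEEEproof}
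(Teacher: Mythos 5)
Your proposal is correct and follows essentially the same route as the paper: both express $\bar{C}$ as the expectation of the Shannon rate over the log-normal SIR and reduce it to a Gauss--Hermite quadrature; the paper merely performs your two-step change of variable in one shot via the substitution $\eta=(\xi\ln\gamma-\mu)/(\sqrt{2}\sigma)$ applied directly to $\bar{C}=W\int_{0}^{\infty}\log_{2}(1+\gamma)f_{\Gamma}(\gamma)\,\mathrm{d}\gamma$. Your explicit attention to the placement of the $\xi$ conversion factor is consistent with where it appears in \eqref{EQ:GAUSS_HERMITE_APPROX}, so no gap remains.
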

\begin{IEEEproof}
	To compute the location-dependent average channel capacity, 
	\begin{align}
		\bar{C} = W \int\limits_{0}^{\infty} {\log_{2} {\left ( 1 + \gamma
		\right)}}f_{\Gamma}\left ( \gamma \right )\text{d}\gamma,
		\label{EQ:AVERAGE_CHANNEL_CAPACITY}
	\end{align}
	we use the \ac{PDF} of the \ac{SIR} with respect to the tagged receiver,
	which is indicated by $f_{\Gamma}\left ( \gamma \right )$.	
	The Gauss-Hermite quadrature \cite{BOOK:ABRAMOWITZ-DOVER03} with the
	substitution $\eta=(\xi \ln{\gamma} - \mu)/\sqrt{2} \sigma$ are used to
	obtain \eqref{EQ:GAUSS_HERMITE_APPROX}.
\end{IEEEproof}
\textcolor{black}{As discussed in \cite{ART:ALOUINI-TVT99, ART:MUKHERJEE-JSAC12}
the aggregate interference perceived by the tagged receiver has non-Gaussian
nature, and the Shannon formula is used as a lower bound for the ergodic
rate.}

\section{Numerical Results}
By using the analytical framework of Section \ref{SEC:ANALYTICAL_FRAMEWORK},
the distributions of the received power, aggregate \ac{CCI} and the resulting
\ac{SIR} are calculated for each one the \acp{ES}.
The outage probability and average channel capacity are also used to evaluate
how the system performs with biased cell association and interference
coordination techniques.

Fig. \ref{FIG:PRX_CDF_PBS} compares the \ac{CDF} of the picocell received power
at the tagged receiver ($Y^\mathrm{P}$) from Monte Carlo simulations with those
obtained with the \ac{LN} approximation.
In this example, the annular observation region is defined by $R_m =
5\,\mathrm{m}$ and $R_M = 75\,\mathrm{m}$ and picocells operate with a fixed
power level of $30\,\mathrm{dBm}$.
The radio channel is affected by path loss with exponent $\alpha = 3$, \ac{LN}
shadowing with standard deviation $\sigma = 6,\, 8,\, 10$ and
$12\,\mathrm{dB}$, and Nakagami fading with shape parameter $m = 16$ which
corresponds to a Rician channel with parameter $K = 14.8\,\mathrm{dB}$.
The proposed framework approximates well the power received by the tagged
receiver from a random picocell within its observation region for varying
number of interfering scenarios.
\begin{figure}[!b]
	\centering
	\includegraphics[width=1.\columnwidth]{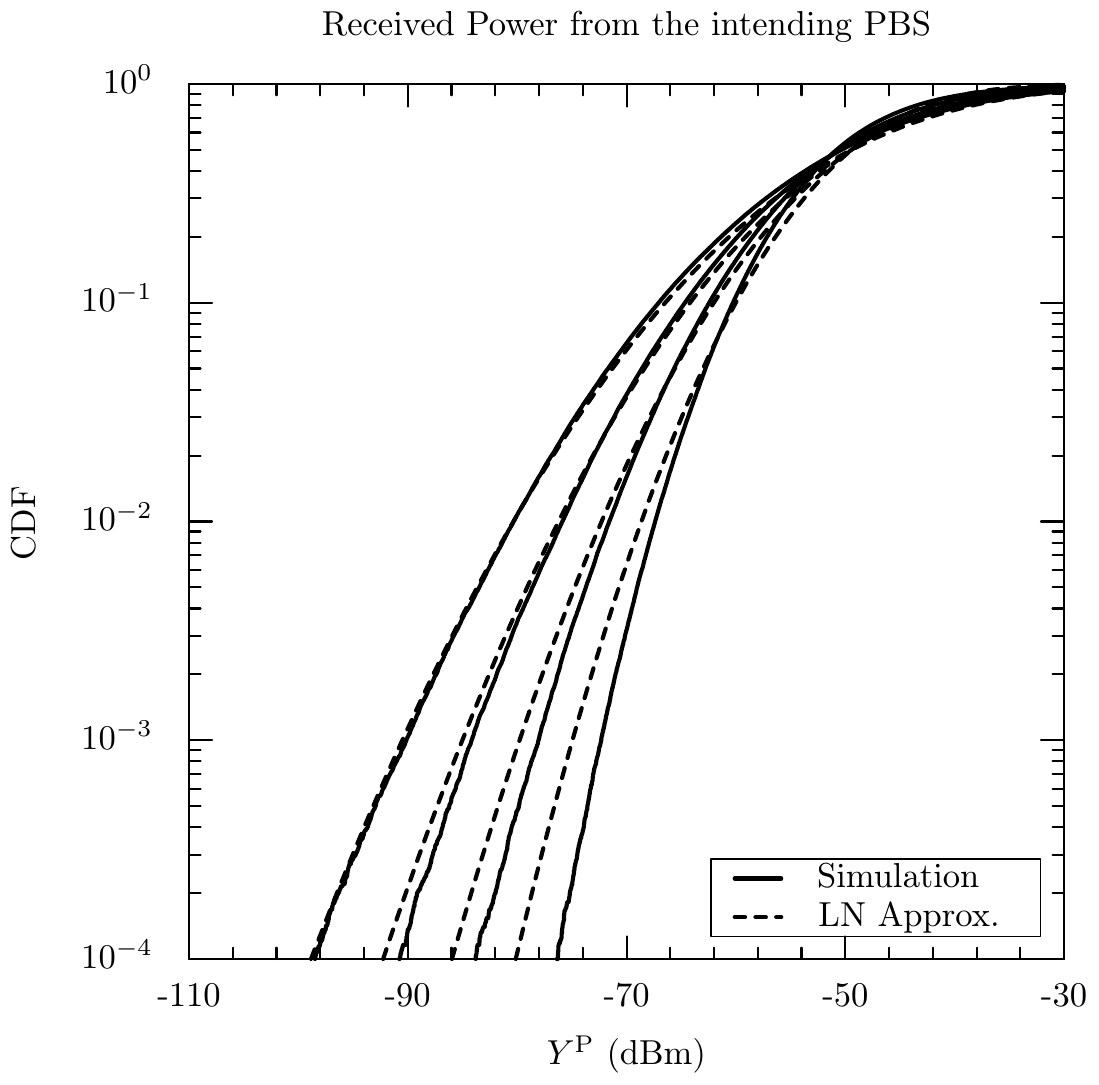}	
	\begin{tikzpicture}[overlay]
		\draw[-stealth', black, line width = 1pt] (3-.33, 9+.33) to (0+.33,
		12-.33);
		\draw (3.5, 9) node[anchor = center] {\small $\sigma = 6, 8, 10,
		12\,\mathrm{dB}$};
	\end{tikzpicture}
	\caption{\ac{CDF} of the power received by the tagged receiver from a
	random picocell within its observation region under shadowed fading with
	$\sigma = \left\{ 6, 8, 10, 12 \right\}\,\mathrm{dB}$ and shape parameter
	$m=16$ (corresponds to a Rician factor $K=14.8\,\mathrm{dB}$).
	\acp{PBS} transmit with constant power equal to $30\,\mathrm{dBm}$.}
	\label{FIG:PRX_CDF_PBS}
\end{figure}
Similarly, Fig. \ref{FIG:PRX_CDF_MBS} shows the \ac{CDF} of the distribution of
the power received from the umbrella macrocell at the tagged receiver
($Y^\mathrm{M}$).
The umbrella \ac{MBS} transmits at $43\,\mathrm{dBm}$.
\begin{figure}[!b]
	\centering
	\includegraphics[width=1.\columnwidth]{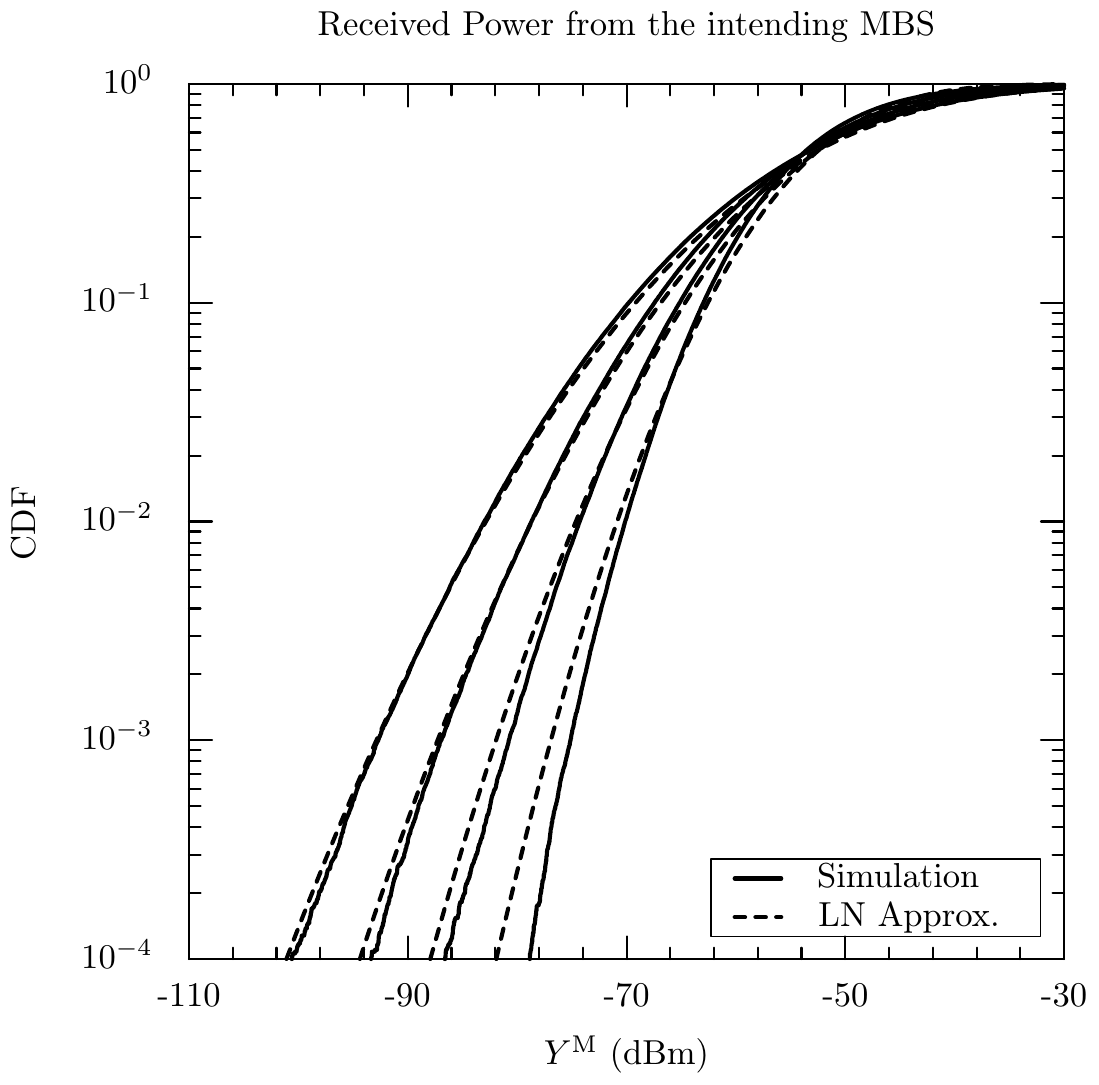}	
	\begin{tikzpicture}[overlay]
		\draw[-stealth', black, line width = 1pt] (3-3*.33, 9+3*.33) to
		(0-1*.33, 12+1*.33);
		\draw (3, 9.5) node[anchor = center] {\small $\sigma = 6, 8, 10,
		12\,\mathrm{dB}$};
	\end{tikzpicture}
	\caption{\ac{CDF} of the power received by the tagged receiver from a
	random transmitter within its observation region under shadowed fading with
	$\sigma = \left\{ 6, 8, 10, 12 \right\}\,\mathrm{dB}$ and shape parameter
	$m=16$ (corresponds to a Rician factor $K=14.8\,\mathrm{dB}$).
	The umbrella \acp{MBS} transmit with constant power equal to
	$43\,\mathrm{dBm}$, respectively.}
	\label{FIG:PRX_CDF_MBS}
\end{figure}
As can be seen, our \ac{LN} model matches well the simulation results in the
macrocell configuration.
In addition, the \ac{LN} approximation is tighter for larger $\sigma$ (standard
deviation), when the resulting shadowing dominates the variation of the
received power distribution.

Fig. \ref{FIG:CCDF_AGGREGATE_CCI} compares the \ac{CCDF} of the aggregate
\ac{CCI} from Monte Carlo simulations with those from the proposed \ac{LN}
approximation.
Our approximation matches well with the simulation results for the evaluation
scenarios under study.
An annular observation region with $R_m = 25\,\mathrm{m}$ and $R_M =
250\,\mathrm{m}$ is considered.
Picocells operate with a fixed power level of $30\,\mathrm{dBm}$ and constitute
a Poisson field of interferers with intensity $\lambda =
10^{-5}\,\mathrm{\ac{PBS}}/\mathrm{m}^2$ (about $2$ picocells on average).
By comparing the scenario where \acp{PBS} are the only source of interference
with that in which \acp{PBS} and the umbrella \ac{MBS} jointly interferer, it
is possible to identify the harmful impact of the macrocell component at the
tagged receiver (about $12\,\mathrm{dB}$).
With that effect in mind, the benefits of using \ac{ABS} to avoid the
interference from the umbrella macrocell altogether becomes evident.
Afterwards, we increase the density of picocells to $\lambda =
10^{-4}\,\mathrm{\ac{PBS}}/\mathrm{m}^2$ (about $20$ cells on average) and
allow surrounding picocells to use the downlink \ac{HII} in order to coordinate
with the serving \ac{BS}.
As a result, the interference experienced by the tagged receiver is further
reduced.
An interesting observation is that depending on the density of picocells and
their relative distance to the tagged receiver, the underlay picocell tier
dominates the aggregate interference.
\begin{figure}[h!]
	\centering
	\includegraphics[width=1.\columnwidth]{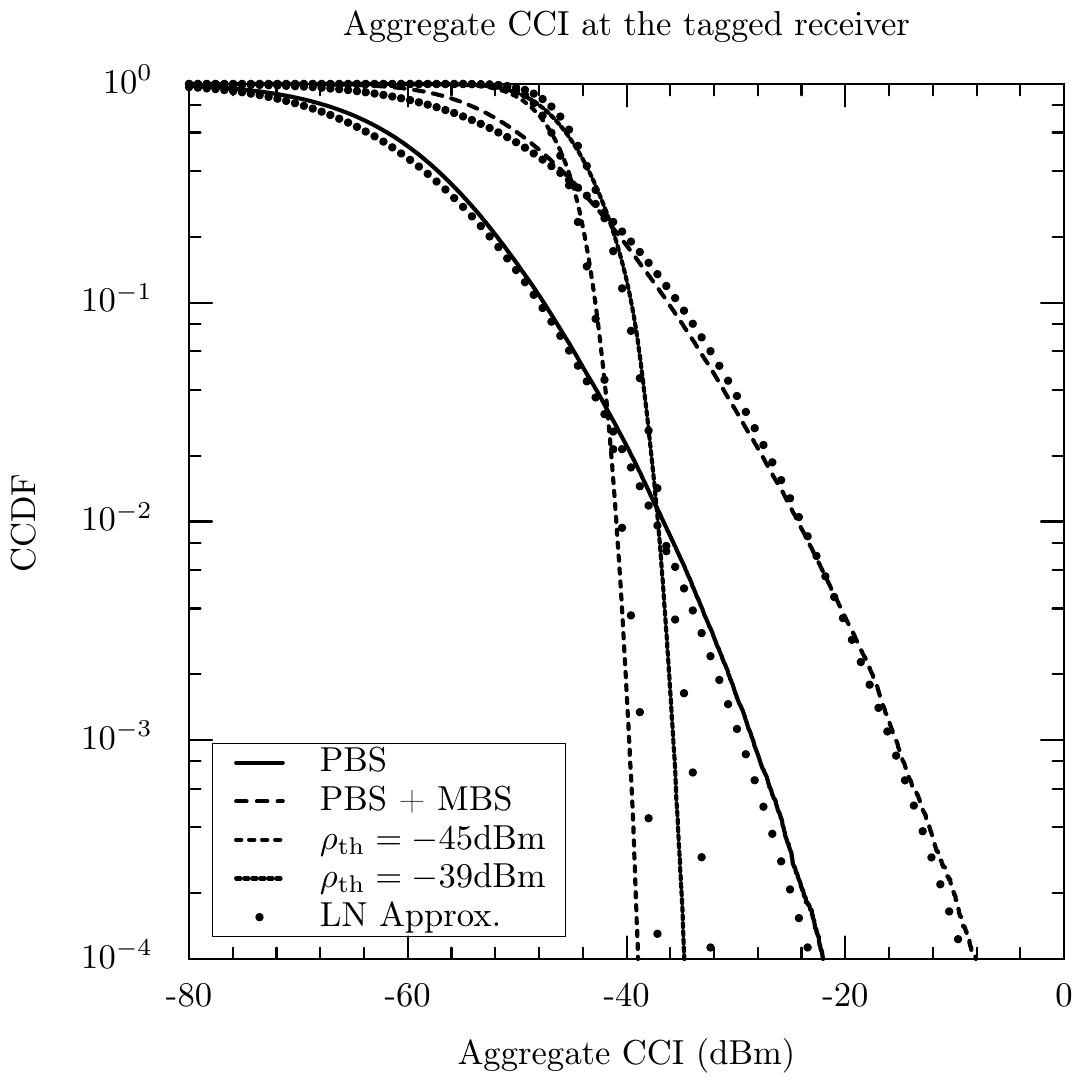}
	\caption{\ac{CCDF} of the aggregate \ac{CCI} $Z$ at the tagged receiver for
	the interference scenarios of Section \ref{SEC:CCI}.}
	\label{FIG:CCDF_AGGREGATE_CCI}
\end{figure}

Fig. \ref{FIG:OUTAGE_PER_DENSITY} shows the outage probability for distinct
evaluation scenarios and increasing density of \acp{PBS}.
The expressions \eqref{EQ:SIR} and \eqref{EQ:OUTAGE_PROBABILITY} in Theorem
\ref{THEOREM:OUTAGE_PROBABILITY} are used to generate the numerical results
shown in this figure.
By comparing the outage probability of the tagged receiver in \ac{ES}$2$ and
\ac{ES}$3$, one observes a performance improvement by avoiding the interference
from the umbrella \acp{MBS}, which is the dominant interferer.
However, the underlaid tier of picocells dominates the resulting interference
as the density of picocells increases -- The \ac{ABS} gains are not so evident
for a density $\lambda$ higher than
$7\times10^{-5}\,\mathrm{\ac{PBS}}/\mathrm{m}^2$.
Hence, the coordination mechanisms are considered in this work to further
reduce the interference levels at the tagged receiver.
When interfering picocells coordinate their transmissions by fulfilling the
criterion $\mathds{1}_{\widetilde{\Phi}}\left( p_b r^{-\alpha} x \right)$ in
scenarios \ac{ES}$4$ and \ac{ES}$5$, the network operation outperforms the
standard configuration wherein \acp{BS} do not self-organize.
In fact, when nodes coordinate following the criterion given in
\eqref{EQ:AGGREGATE_CCI_DLHII}, the tagged receiver experiences much better
link quality since less interferers are active in its reserved subframes.
\begin{figure}[h!]
	\centering
	\includegraphics[width=1.\columnwidth]{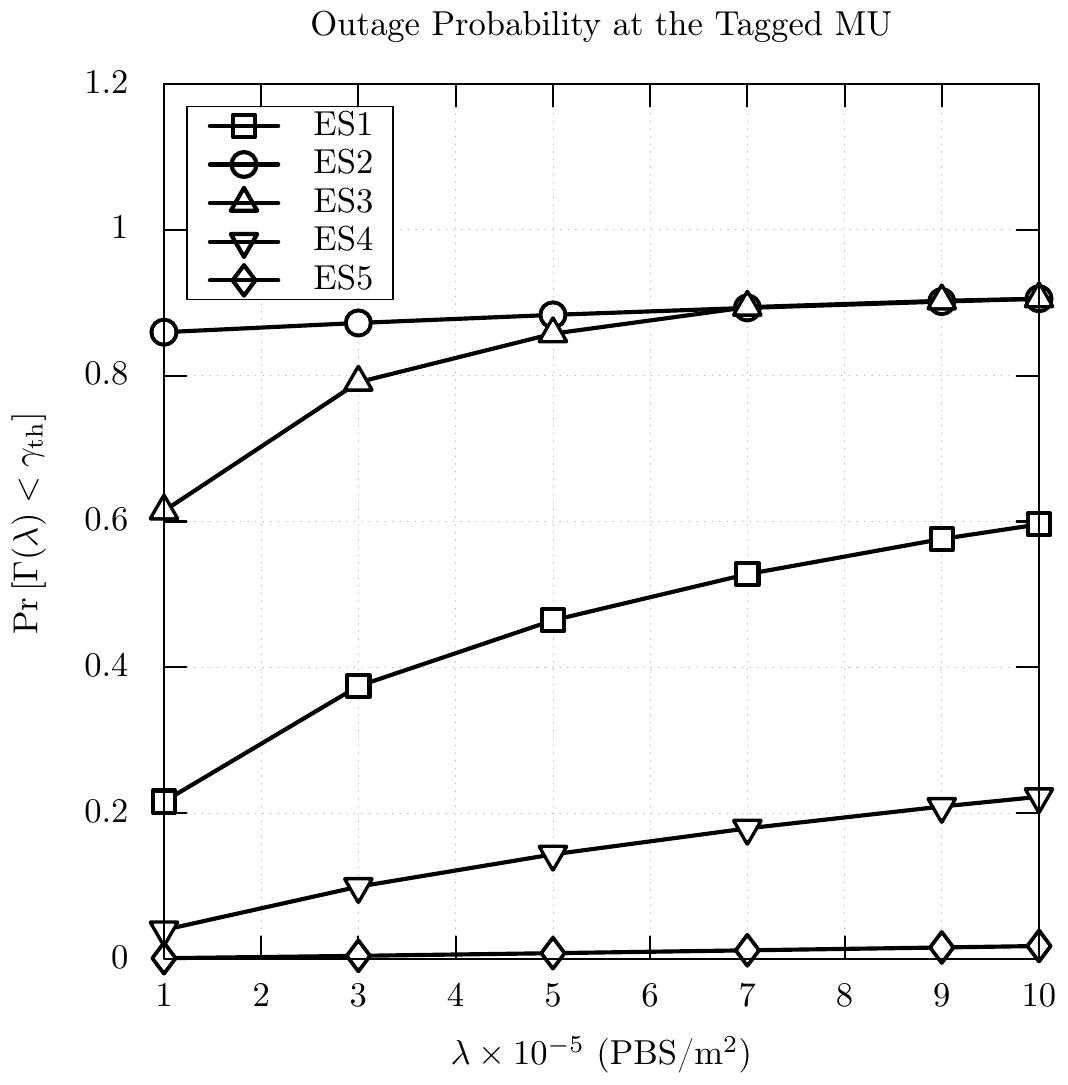}
	\caption{Outage probability at the tagged receiver for increasing density
	of interfering picocells.}
	\label{FIG:OUTAGE_PER_DENSITY}
\end{figure}


In Fig. \ref{FIG:LSE_PER_DENSITY}, we use \eqref{EQ:GAUSS_HERMITE_APPROX} (see
Theorem \ref{THEOREM:ASE}) to compute the average channel capacity of the
tagged link for an increasing density of interfering picocells.
The performance of the tagged receiver is severely degraded by the umbrella
\ac{MBS} which corroborates our previous outage probability results.
By employing interference avoidance techniques in scenarios \ac{ES}$4$ and
\ac{ES}$5$, the channel capacity of the tagged receiver link improves
significantly.
In addition, the tagged receiver benefits mostly from the coordination of
surrounding picocells by means of the \ac{DL}--\ac{HII} which corresponds to
\ac{ES}$4$ and \ac{ES}$5$.
For instance, the tagged receiver attains at most $1\mathrm{bps}/\mathrm{Hz}$
in \ac{ES}$3$, while an average channel capacity of about
$2.5\mathrm{bps}/\mathrm{Hz}$ is achieved in \ac{ES}$5$.
\begin{figure}[h!]
	\centering
	\includegraphics[width=1.\columnwidth]{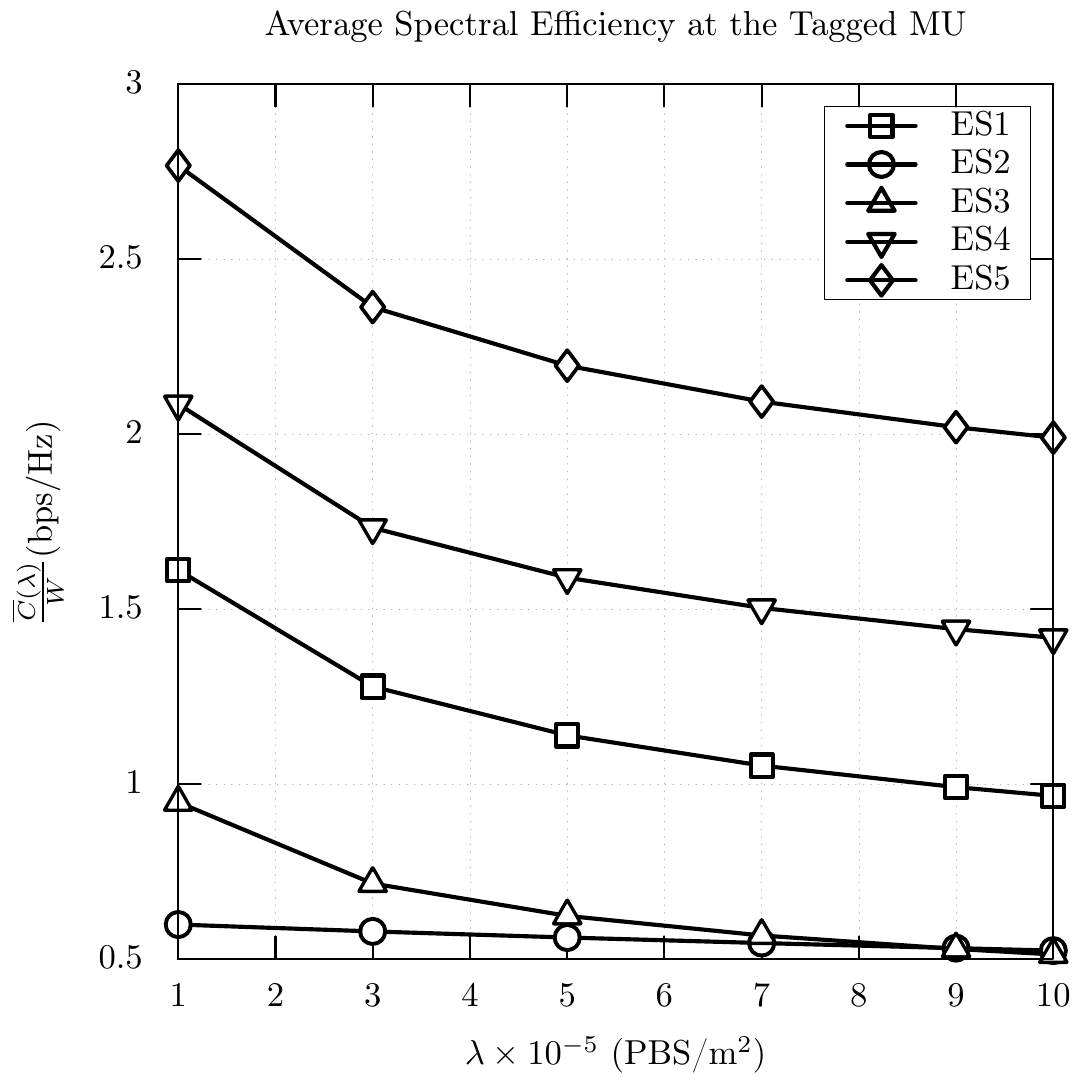}
	\caption{Average channel capacity at the tagged receiver for increasing
	density of interfering picocells.}
	\label{FIG:LSE_PER_DENSITY}
\end{figure}

\section{Conclusions and Final Remarks}
\label{SEC:CONCLUSIONS}
In this paper, we investigate the problem of co-channel interference in
heterogeneous networks composed of self-organizing small cells and legacy
macrocells.
An analytical framework which resorts to stochastic geometry and higher-order
statistics through the concept of cumulants is introduced in order to
characterize network dynamics and channel variations.
We use this framework to recover the distribution of the \ac{CCI} and to
evaluate the system performance in terms of outage probability and average
spectral efficiency of the tagged link.
For the scenarios under study, results show that our analytical model matches
well with numerical results obtained using Monte Carlo simulations.
Aiming to reduce the co-channel interference generated at the underlaid tier,
picocells coordinate their transmissions using the \ac{DL}-\ac{HII} incurring
minimum overhead.
Finally, by employing the concept of virtual \acp{DAS}, the user of interest
not only benefits from reduced interference, but also from the maximum ratio
transmission among the serving picocells.

\begin{appendices}
	\section{Proof of Proposition \ref{PROPOSITION:REB_PROBABILITY}}
	\label{PROOF:REB_PROBABILITY}
	From \eqref{EQ:LOGNORMAL_PARAMETERS}, we know that $Y^\mathrm{M}$ and
	$Y^\mathrm{P}$ follow \ac{LN} distribution with parameters $\left(
	\mu_{\mathrm{M}}, \sigma_{\mathrm{M}} \right)$ and $\left(
	\mu_{\mathrm{P}}, \sigma_{\mathrm{P}} \right)$, respectively.
	Thus, the handover probability is given by
	\begin{align}
		\Pr \left[ Y^\mathrm{M} < Y^\mathrm{P} + \delta \right] =
		\int\limits_0^\infty\int\limits_0^{y^P + c} {f_{Y^M}\left ( y^M \right
		) f_{Y^P} \left ( y^P \right ) \mathrm{d}y^M \mathrm{d}y^P}.
		\label{EQ:HO_PROBABILITY_1}
	\end{align}
	{\noindent where $f_{Y^M} \left ( y^M \right )$ and $f_{Y^P} \left ( y^P
	\right )$ yield the probability density function of the umbrella \ac{MBS}
	and target picocell, respectively.}
	After evaluating the inner-most integral, we obtain
	\begin{align}
		\Pr \left[ Y^\mathrm{M} < Y^\mathrm{P} + \delta \right] =
		\int\limits_0^\infty\frac{1}{2} \operatorname{Erfc}\left[\frac{\mu_M -
		\log\left( c + y^P \right)}{\sqrt{2} \sigma_M}\right] f_{Y^P} \left (
		y^P \right )\mathrm{d}y^P
		\label{EQ:HO_PROBABILITY_2}
	\end{align}
	After making the change of variate $\eta = \frac{-\mu_P + \log\left( y^P
	\right)}{\sqrt{2} \sigma_P}$ in \eqref{EQ:HO_PROBABILITY_2}, we obtain
	\begin{align}
		\Pr \left[ Y^\mathrm{M} < Y^\mathrm{P} + \delta \right] =
		\int\limits_{-\infty}^\infty {\frac{e^{-\eta^2} \operatorname{Erfc}
		\left[ \frac{\mu_M - \log \left( c + e^{\mu_P + \sqrt{2} \eta \sigma_P}
		\right)}{\sqrt{2} \sigma_M} \right]}{2 \sqrt{\pi}}} \mathrm{d} \eta
		\label{EQ:HO_PROBABILITY_3}
	\end{align}
	To evaluate $\Pr \left[ Y^\mathrm{M} < Y^\mathrm{P} + \delta \right]$ in
	\eqref{EQ:HO_PROBABILITY_3}, we then use the Gauss-Hermite quadrature
	\cite{BOOK:ABRAMOWITZ-DOVER03},
	\begin{align}
		\int\limits_{-\infty}^{+\infty} e^{-\eta^2} f(\eta)\,d \eta =
		\sum^{K}_{k=1} {\omega_{k} f \left( \eta_{k} \right) + R_K},
		\label{EQ:GAUSS_HERMITE_RV}
	\end{align}
	{\noindent where $\eta_{k}$ is the $k^\mathrm{th}$ zero of the Hermite
	polynomial $H_{K} \left( \eta \right)$ of degree $K$, $\omega_{k}$ is the
	corresponding weight of the function $f \left( \,\cdot\, \right)$ at the
	$k^\mathrm{th}$ abscissa, and $R_{K}$ is the remainder value.}
	Finally, we obtain \eqref{EQ:REB_PROBABILITY} by performing the
	substitutions indicated above.

	\section{Proof of Proposition \ref{PROPOSITION:INTERFERENCE_COMPONENT_CF}}
	\label{PROOF:INTERFERENCE_COMPONENT_CF}
	From \eqref{EQ:INTERFERENCE_COMPONENT}, the \ac{CF} of the representative
	interference component of a random transmitter within the observation
	region $\mathcal{O}$ is written as,
	\begin{align}
		\Psi_{Y} \left ( \omega \right ) &= \mathrm{E}\left [ e^{j \omega Y}
		\right ] \nonumber \\
		&= \int\limits_0^\infty { \int\limits_{R_m}^{R_M} { e^{j \omega p
		r^{-\alpha} x} f_{R,X} \left ( r, x \right ) \mathrm{d}r\, \mathrm{d}x
		} }.
		\label{EQ:INTERFERENCE_COMPONENT_CF_2}
	\end{align}
	{\noindent where $f_{R,X} \left (r, x \right )$ is the joint density
	function of the separation distance between interferers and the tagged
	receiver, and the shadowed fading.}
	Recalling that the finite field of interferers is within an observation
	region which is delimited by $R_m$ and $R_M$, the \ac{PDF} of the distances
	from random points uniformly scattered within $\mathcal{O}$ to the tagged
	receiver is,
	\begin{align}
		f_R \left ( r \right ) = \frac{2 r}{R_M^2 - R_m^2}.
		\label{EQ:INTERFERENCE_COMPONENT_CF_3}
	\end{align}
	By substituting \eqref{EQ:INTERFERENCE_COMPONENT_CF_3} in
	\eqref{EQ:INTERFERENCE_COMPONENT_CF_2}, we obtain
	\begin{align}
		\Psi_Z \left ( \omega \right ) = \frac{2}{R_M^2 - R_m^2}
		\int\limits_{0}^\infty { \int\limits_{R_m}^{R_M} { \exp{\left( j \omega
		p r^{-\alpha} x \right)}f_X \left ( x \right) r \mathrm{d}r\,
		\mathrm{d}x } }.
		\label{EQ:INTERFERENCE_COMPONENT_CF_4}
	\end{align}
	After manipulating the above expression by performing substitutions and
	simplifications as indicated in Proposition
	\ref{PROPOSITION:INTERFERENCE_COMPONENT_CF}, we obtain
	\eqref{EQ:INTERFERENCE_COMPONENT_CF}.

	\section{Proof of Proposition \ref{PROPOSITION:INTERFERENCE_COMPONENT_CUMULANT}}
	\label{PROOF:INTERFERENCE_COMPONENT_CUMULANT}
	Consider the auxiliary functions $\displaystyle f \left( \omega \right) =
	\int_0^\infty {\int_{R_m}^{R_M} { \exp{ \left( j \omega p r^{-\alpha} x
	\right) } f_X\left ( x \right ) r \mathrm{d}r \mathrm{d}x} }$ and
	$\displaystyle \left( g \circ f \right) \left ( \omega \right) = \ln {
	\left [ f \left ( \omega \right ) \right ] }$.
	Now, using the Fa\`{a} di Bruno's formula \cite{BOOK:ABRAMOWITZ-DOVER03}
	which generalizes the chain rule to compute higher order derivatives of the
	composition of two functions $\left( g \circ f \right) \left( \omega
	\right)$, we have
	\begin{align}
		\frac{\partial^n}{\partial \omega^n} \left( g \circ f \right) \left (
		\omega \right ) &= \sum_{i=0}^n g^{(i)}\left [ f\left ( \omega \right )
		\right] \cdot B_{n,i}\left[ f'(\omega), f''(\omega), \dots,
		f^{(n-i+1)}(\omega)\right],
		\label{EQ:FAA_DI_BRUNOS_FORMULA_1}
	\end{align}
	{\noindent where $B_{n,i}\left[ f'( 0), f''( 0 ), \dots, f^{(n-i+1)}(
	0)\right]$ is the partial Bell polynomial \cite{ART:BELL-TAM34}.}
	After evaluating \eqref{EQ:FAA_DI_BRUNOS_FORMULA_1} at $\omega = 0$ and
	using the definition of cumulants from
	\eqref{EQ:CUMULANT_PARTIAL_DERIVATIVE}, we obtain the following result
	\begin{align}
		\kappa_n  &= \frac{1}{j^n} \sum_{i=0}^n g^{(i)}\left [ f( 0 ) \right]
		\cdot B_{n,i}\left[ f'( 0 ), f''( 0 ), \dots, f^{(n-i+1)}( 0)\right].
		\label{EQ:INTERFERENCE_COMPONENT_CUMULANT2}
	\end{align}
	The derivatives of the auxiliary function $f \left( \omega \right)$ at zero
	are given by,
	\begin{align}
		\beta_n &= \left. \frac{\partial^n f \left ( \omega \right )}{\partial
		\omega^n} \right]_{w = 0} \nonumber \\
		&= j^n p^n \int\limits_0^\infty {x^n f_X \left ( x \right )
		\mathrm{d}x} \int\limits_{R_m}^{R_M} {r^{1 - n\alpha} \mathrm{d}r}.
		\label{EQ:INTERFERENCE_COMPONENT_CUMULANT3}
	\end{align}
	By substituting \eqref{EQ:INTERFERENCE_COMPONENT_CUMULANT3} into
	\eqref{EQ:INTERFERENCE_COMPONENT_CUMULANT2}, the final expression for the
	$n^{\mathrm{th}}$ cumulant of the aggregate \ac{CCI} in
	\eqref{EQ:INTERFERENCE_COMPONENT_CUMULANT} results.
	%
	%
	\section{Proof of Proposition \ref{PROPOSITION:CUMULANT_PBS_FULL_INTERFERENCE}}
	\label{PROOF:CUMULANT_PBS_FULL_INTERFERENCE}
	We start from \eqref{EQ:CHARACTERISTIC_FUNCTION} and apply Campbell's
	theorem \cite{BOOK:KINGMAN-OXFORD93, BOOK:STOYAN-WILEY95} to derive the
	\ac{CF} of the aggregate \ac{CCI} perceived by the tagged \ac{MU} as
	\begin{align}
		\Psi_Z\left ( \omega \right ) = \exp\Bigg\{ 2 \pi
		\int\limits_{0}^\infty \int\limits_{R_m}^{R_M}\negthickspace\Big[
		{}\exp\left ( j w p r^{-\alpha} x \right ) -1\Big] \lambda
		f_X\negthinspace\left ( x \right ) r \text{d}r \text{d}x \Bigg\}.
		\label{EQ:CF_FULL_INTERFERENCE}
	\end{align}
	By substituting \eqref{EQ:CF_FULL_INTERFERENCE} in
	\eqref{EQ:CUMULANT_PARTIAL_DERIVATIVE}, and after integrating with respect
	to $r$, we write the $n^\text{th}$ cumulant as
	\begin{align}
		%
		%
		\kappa_{n}\negthickspace \left( \widetilde{\Phi} \right) = \frac{2 \pi
		\lambda p^n}{n \alpha -2} \left(R_m^{2-\alpha n}-R_M^{2-\alpha
		n}\right) \int\limits_{0}^{\infty} {x^n f_X\left ( x \right )}
		\text{d}x.
		%
		%
		%
		\label{EQ:NTH_CUMULANT_FULL_INTERFERENCE_PROOF}
	\end{align}
	Recalling that $\operatorname{E}_X^n\negthinspace\left [ 0, \infty \right ]
	= \int_{0}^{\infty} {x^n f_X\left ( x \right )} \text{d}x$, we turn our
	attention to the case where transmissions are affected by the shadowed
	fading, and so from Section \ref{SEC:PHY_MODEL},
	$\operatorname{E}_X^n\negthinspace\left [ 0, \infty \right ] = e^{n \mu +
	\frac{1}{2} n^2 \sigma^2}$ which gives
	\eqref{EQ:CUMULANT_PBS_FULL_INTERFERENCE}.
	\section{Proof of Proposition \ref{PROPOSITION:CUMULANT_R1}}
	\label{PROOF:CUMULANT_R1}
	By using the indicator function in \eqref{EQ:AGGREGATE_CCI_DLHII}, we write
	the \ac{CF} of the aggregate \ac{CCI} for the $\mathcal{R}_1$ as,
	\begin{align}
	   \Psi_{Z_1} \left ( \omega \right )=\exp\Bigg\{ 2 \pi
	   \int\limits_{0}^\infty \int\limits_{R_m}^{R_M}\negthickspace\Big[
	   \exp\left ( j w p^{\prime} r^{-\alpha} x \right) - 1\Big]
	   \lambda f_X\negthinspace\left ( x \right )
	   {\mathds{1}_{\widetilde{\Phi}}\left( p_b r^{-\alpha} x \right)} r
	   \mathrm{d}r\,\mathrm{d}x \Bigg\}.
	   \label{EQ:CHARACTERISTIC_FUNCTION_DLPC}
	\end{align}
	And from \eqref{EQ:CUMULANT_PARTIAL_DERIVATIVE} the $n^{\textnormal{th}}$
	cumulant is,
	
	\begin{align}
		\kappa_{n} &= 2\pi\lambda\negthickspace \int\limits_{X} {
		\int\limits_{R_m}^{\min\left [ R_M, \left ( x/{\varrho}_{th}
		\right)^{1/\alpha} \right ]}\hspace{-2em} {(p^{\prime})^n r^{1 - n
		\alpha} x^n } f_X\negthinspace\left ( x
		\right)}\mathrm{d}r\,\mathrm{d}x \nonumber \\
		&= 2 \pi \lambda \left[ \,\,\int\limits_{{\varrho}_M}^{\infty}
		{ \int\limits_{R_m}^{R_M} {(p^{\prime})^n r^{1 - n \alpha} x^n }
		f_X\negthinspace\left ( x \right)}\mathrm{d}r\,\mathrm{d}x +
		\int\limits_{{\varrho}_m}^{{\varrho}_M} { \int\limits_{R_m}^{\left (
		x/{\varrho}_{th} \right)^{1/\alpha}}\negthickspace {(p^{\prime})^n r^{1
		- n \alpha} x^n } f_X\negthinspace\left ( x
		\right)}\mathrm{d}r\,\mathrm{d}x \right],
		\label{EQ:CUMULANT_DLPC}
	\end{align}
	{\noindent where ${\varrho}_m={\varrho}_{th} R_m^\alpha$ and
	${\varrho}_M={\varrho}_{th} R_M^\alpha$}.
	By integrating \eqref{EQ:CUMULANT_DLPC} with respect to $r$, we obtain
	\begin{align}
		\kappa_n &= \frac{2 \pi \lambda\,(p^{\prime})^n}{n\alpha  - 2}
		\bigg\{ \left(R_m^{2-\alpha n}-R_M^{2-\alpha  n}\right)
		\int\limits_{{\varrho}_M}^{\infty} { { x^n } f_X\negthinspace\left ( x
		\right)}\mathrm{d}x + \int\limits_{{\varrho}_m}^{{\varrho}_M} { { \left [
		x^n R_m^{2-n \alpha} - x^{\frac{2}{\alpha}}
		\varrho_{th}^{n-\frac{2}{\alpha}} \right ] } f_X\negthinspace\left ( x
		\right)}\mathrm{d}x \bigg\}.
		\label{a1}
	\end{align}

	Finally, we compute the partial moments of the approximating \ac{LN}
	\ac{RV} $X$ by repeatedly applying Definition \ref{DEF:PARTIAL_MOMENT}, and
	by using the change of variable $X=e^{\mu + \sigma Z}$, where $Z\thicksim
	\mathsf{Normal}\left( 0, 1 \right)$, along with the substitutions
	$\tilde{\varrho}_M = \frac{\ln{\varrho_M - \mu}}{\sigma}$ and
	$\tilde{\varrho}_m = \frac{\ln{\varrho_m - \mu}}{\sigma}$.
	\begin{align}
		\operatorname{E}_X^n{\left [ {\varrho}_M, \infty \right ]} = e^{ n \mu
		+ \frac{n^2 \sigma ^2}{2}} \operatorname{Q}\left[\tilde{\varrho}_M - n
		\sigma\right],
		\label{1}
	\end{align}
	\begin{align}
		\operatorname{E}_X^ \frac{2}{\alpha}{\left [ {\varrho}_m, {\varrho}_M
		\right ]}&= e^{ \frac{2 \mu}{\alpha }+\frac{2 \sigma ^2}{\alpha ^2}}
		\left(\operatorname{Q}\left[\tilde{\varrho}_m-\frac{2 \sigma }{ \alpha
		}\right] - \operatorname{Q}\left[\tilde{\varrho}_M-\frac{2 \sigma
		}{\alpha }\right]\right),
		\label{2}
	\end{align}
	\begin{align}
		\operatorname{E}_X^n{\left [ {\varrho}_m, {\varrho}_M \right ]}&=e^{n
		\mu + \frac{n^2 \sigma ^2}{2}} \left(\operatorname{Q}\left[
		\tilde{\varrho}_m-n \sigma \right] - \operatorname{Q}\left[
		\tilde{\varrho}_M-n \sigma \right]\right),
		\label{3}
	\end{align}
	{\noindent where $\operatorname{Q}[u] = \frac{1}{\sqrt{2\pi}} \int_u^\infty
	e^{-\frac{v^2}{2}} \, \mathrm{d}v$.}
	And by replacing the above expressions in \eqref{EQ:CUMULANT_DLPC},
	\eqref{EQ:CUMULANT_R1} results.
	\section{Proof of Proposition \ref{PROPOSITION:CUMULANT_R2}}
	\label{PROOF:CUMULANT_R2}
	For computing the $n^\mathrm{th}$ cumulant of the aggregate interference for
	interfering picocells in $\mathcal{R}_2$, we, once again, begin
	formulating the corresponding \ac{CF} as
	\begin{align}
	   \Psi_{Z_2} \left ( \omega \right ) = \exp\Bigg\{ 2 \pi \int\limits_{0}^\infty
	   \int\limits_{R_m}^{R_M} \Big[ {}\exp\left ( j w p r^{-\alpha} x \right)
	   - 1\Big]  \lambda f_X\negthinspace\left ( x \right)
	   {\mathds{1}_{\widetilde{\Phi}}^c\left( p_b r^{-\alpha} x \right)} r
	   \mathrm{d}r\,\mathrm{d}x \Bigg\}.
	   \label{EQ:CHARACTERISTIC_FUNCTION_1ST_MARK}
	\end{align}
	{\noindent where $\mathds{1}_{\widetilde{\Phi}}^c\left( p_b r^{-\alpha} x
	\right)$ corresponds to the event of not detecting interfering picocells.}

	The $n^\mathrm{th}$ cumulant is then given by
	\begin{align}
		\kappa_{n} &= 2 \pi \lambda \int\limits_{0}^\infty {
		\int\limits_{\max\left [ R_m, \left ( x/{\varrho}_{th}
		\right)^{1/\alpha} \right ]}^{R_M}\hspace{-2em} {p^n r^{1 - n \alpha}
		x^n } f_X\negthinspace\left ( x \right)}\mathrm{d}r\,\mathrm{d}x
		\nonumber \\
		&=2 \pi \lambda \left[ \int\limits_{0}^{{\varrho}_m} {
		\int\limits_{R_m}^{R_M} {p^n r^{1 - n \alpha} x^n }
		f_X\negthinspace\left ( x \right)}\mathrm{d}r\,\mathrm{d}x +
		\int\limits_{{\varrho}_m}^{{\varrho}_M} { \int\limits_{\left (
		x/{\varrho}_{th} \right )^{1/\alpha}}^{R_M} {p^n r^{1 - n \alpha} x^n }
		f_X\negthinspace\left ( x \right)}\mathrm{d}r\,\mathrm{d}x \right].
		\label{EQ:CUMULANT_1ST_MARK}
	\end{align}
	Similar to the derivation of \eqref{a1}, we first integrate with respect to
	$r$ and obtain
	\begin{align}
		\kappa_n = \frac{2 \pi \lambda\,p^n}{n\alpha  - 2} \bigg\{
		\left(R_m^{2-\alpha n}-R_M^{2-\alpha  n}\right)
		\int\limits_{-\infty}^{{\varrho}_m} { { x^n } f_X\negthinspace\left ( x
		\right)}\mathrm{d}x + \int\limits_{{\varrho}_m}^{{\varrho}_M} { { \left
		[ x^{\frac{2}{\alpha}} \varrho_{th}^{n-\frac{2}{\alpha}} - x^n R_M^{2-n
		\alpha} \right ] } f_X\negthinspace\left ( x \right)}\mathrm{d}x
		\bigg\}.
		\label{a2}
	\end{align}
	And after computing the following partial moment, we obtain the expression
	\eqref{EQ:CUMULANT_R2}.
	\begin{align}
		&\operatorname{E}_X^n{\left [ -\infty, {\varrho}_m \right ]}= e^{ n \mu
		+\frac{n^2 \sigma ^2}{2} }\left( 1 -
		\operatorname{Q}{\left[\tilde{\varrho}_m - n \sigma\right]} \right).
		\label{4}
	\end{align}
\end{appendices}

\bibliographystyle{IEEEtran}
\footnotesize
\bibliography{IEEEabrv,bib/son}

\begin{thebibliography}{10}
\providecommand{\url}[1]{#1}
\csname url@samestyle\endcsname
\providecommand{\newblock}{\relax}
\providecommand{\bibinfo}[2]{#2}
\providecommand{\BIBentrySTDinterwordspacing}{\spaceskip=0pt\relax}
\providecommand{\BIBentryALTinterwordstretchfactor}{4}
\providecommand{\BIBentryALTinterwordspacing}{\spaceskip=\fontdimen2\font plus
\BIBentryALTinterwordstretchfactor\fontdimen3\font minus
  \fontdimen4\font\relax}
\providecommand{\BIBforeignlanguage}[2]{{%
\expandafter\ifx\csname l@#1\endcsname\relax
\typeout{** WARNING: IEEEtran.bst: No hyphenation pattern has been}%
\typeout{** loaded for the language `#1'. Using the pattern for}%
\typeout{** the default language instead.}%
\else
\language=\csname l@#1\endcsname
\fi
#2}}
\providecommand{\BIBdecl}{\relax}
\BIBdecl

\bibitem{ART:PREHOFER-ICM05}
C.~Prehofer and C.~Bettstetter, ``Self-organization in communication networks:
  principles and design paradigms,'' \emph{{IEEE} Commun. Mag.}, vol.~43,
  no.~7, pp. 78--85, Jul. 2005.

\bibitem{ART:ALIU-ICST12}
O.~Aliu, A.~Imran, M.~Imran, and B.~Evans, ``A survey of self organisation in
  future cellular networks,'' \emph{{IEEE} Commun. Surveys Tuts.}, 2012, {IEEE}
  early access articles.

\bibitem{ART:AKHTMAN-PROC10}
J.~Akhtman and L.~Hanzo, ``Heterogeneous networking: An enabling paradigm for
  ubiquitous wireless communications,'' \emph{Proceedings of the IEEE},
  vol.~98, no.~2, pp. 135--138, Feb. 2010.

\bibitem{ART:GUVENC-ICL11}
I.~G\"{u}ven\c{c}, ``Capacity and fairness analysis of heterogeneous networks
  with range expansion and interference coordination,'' \emph{{IEEE} Commun.
  Lett.}, vol.~15, no.~10, pp. 1084--1087, Oct. 2011.

\bibitem{PROC:MUKHERJEE-ASILOMAR11}
S.~Mukherjee and I.~G\"{u}ven\c{c}, ``Effects of range expansion and
  interference coordination on capacity and fairness in heterogeneous
  networks,'' in \emph{45th Asilomar Conference on Signals, Systems and
  Computers}, Nov. 2011, pp. 1855--1859.

\bibitem{ART:PEREZ-JSAC12}
D.~L{\'o}pez-P{\'e}rez, X.~Chu, and I.~G\"{u}ven\c{c}, ``On the expanded region
  of picocells in heterogeneous networks,'' \emph{{IEEE} J. Sel. Areas
  Commun.}, vol.~6, no.~3, pp. 281--294, Jun. 2012.

\bibitem{ART:MUKHERJEE-JSAC12}
S.~Mukherjee, ``Distribution of downlink sinr in heterogeneous cellular
  networks,'' \emph{{IEEE} J. Sel. Areas Commun.}, vol.~30, no.~3, pp.
  575--585, Apr. 2012.

\bibitem{PROC:JO-GLOBECOM11}
H.-S. Jo, Y.~J. Sang, P.~Xia, and J.~G. Andrews, ``Outage probability for
  heterogeneous cellular networks with biased cell association,'' in \emph{IEEE
  Globecom}, Houston, TX, USA, Dec. 2011.

\bibitem{ART:LIEN-TWC11}
S.-Y. Lien, Y.-Y. Lin, and K.-C. Chen, ``Cognitive and game-theoretical radio
  resource management for autonomous femtocells with qos guarantees,''
  \emph{{IEEE} Trans. Wireless Commun.}, vol.~10, no.~7, pp. 2196--2206, Jul.
  2011.

\bibitem{PROC:OKINO-ICC11}
K.~Okino, T.~Nakayama, C.~Yamazaki, H.~Sato, and Y.~Kusano, ``Pico cell range
  expansion with interference mitigation toward {LTE-A} heterogeneous
  networks,'' in \emph{ICC Workshops}, Kyoto, Japan, Jun. 2011, pp. 1--5.

\bibitem{PROC:MUKHERJEE-ICC12}
S.~Mukherjee, ``Downlink sinr distribution in a heterogeneous cellular wireless
  network with biased cell association,'' in \emph{IEEE International
  Conference on Communications}, Ottawa, Canada, Jun. 2012, pp. 6780--6786.

\bibitem{BOOK:BADDELEY-SPRINGER03}
A.~Baddeley, I.~B\'{a}r\'{a}ny, R.~Schneider, and W.~Weil, \emph{Stochastic
  Geometry}.\hskip 1em plus 0.5em minus 0.4em\relax Springer, 2006.

\bibitem{ART:ANDREWS-CM10}
J.~Andrews, R.~Ganti, M.~Haenggi, N.~Jindal, and S.~Weber, ``A primer on
  spatial modeling and analysis in wireless networks,'' \emph{{IEEE} Commun.
  Mag.}, vol.~48, no.~11, pp. 156--163, Nov. 2010.

\bibitem{ART:INALTEKIN-JSAC09}
H.~Inaltekin, M.~Chiang, H.~V. Poor, and S.~B. Wicker, ``On unbounded path-loss
  models: Effects of singularity on wireless network performance,''
  \emph{{IEEE} J. Sel. Areas Commun.}, vol.~27, no.~7, pp. 1078--1092, Sep.
  2009.

\bibitem{BOOK:STUBER-SPRINGER00}
G.~L. St\"{u}ber, \emph{Principles of mobile communication}, 2nd~ed.\hskip 1em
  plus 0.5em minus 0.4em\relax Springer, 2000.

\bibitem{ART:HO-ACM95}
M.-J. Ho and G.~L. St\"{u}ber, ``Capacity and power control for {CDMA}
  microcells,'' \emph{ACM Journal on Wireless Networks}, vol.~1, no.~3, pp.
  355--363, Oct. 1995.

\bibitem{BOOK:ABRAMOWITZ-DOVER03}
M.~{Abramowitz} and I.~A. {Stegun}, \emph{Handbook of Mathematical Functions
  with Formulas, Graphs, and Mathematical Tables}, 9th~ed.\hskip 1em plus 0.5em
  minus 0.4em\relax Dover, 1965.

\bibitem{BOOK:KINGMAN-OXFORD93}
J.~F.~C. Kingman, \emph{Poisson Processes}.\hskip 1em plus 0.5em minus
  0.4em\relax Oxford University Press, 1993.

\bibitem{BOOK:STOYAN-WILEY95}
D.~Stoyan, W.~S. Kendall, and J.~Mecke, \emph{Stochastic Geometry and Its
  Applications}, 2nd~ed.\hskip 1em plus 0.5em minus 0.4em\relax
  Wiley-Blackwell, 1995.

\bibitem{ART:GHASEMI-JSAC08}
A.~Ghasemi and E.~S. Sousa, ``Interference aggregation in spectrum-sensing
  cognitive wireless networks,'' \emph{{IEEE} J. Sel. Areas Commun.}, vol.~2,
  no.~1, pp. 41--56, Feb. 2008.

\bibitem{BOOK:RESNICK-BIRKHAUSER1999}
S.~Resnick, \emph{A Probability Path}.\hskip 1em plus 0.5em minus 0.4em\relax
  Birkh\"{a}user Boston, 1999.

\bibitem{PROC:WU-GLOBECOM05}
J.~Wu, N.~B. Mehta, and J.~Zhang, ``A flexible lognormal sum approximation
  method,'' in \emph{Globecom}, St. Louis, MO, 28 Nov.--2 Dec. 2005, pp.
  3413--3417.

\bibitem{PROC:DIMOU-VTC2009}
K.~Dimou, M.~Wang, Y.~Yang, M.~Kazmi, A.~Larmo, J.~Pettersson, W.~Muller, and
  Y.~Timner, ``Handover within {3GPP LTE}: Design principles and performance,''
  in \emph{IEEE VTC 2009--Fall}, Anchorage, Alaska, USA, Sep. 2009, pp. 1--5.

\bibitem{TR:3GPP-TS23.009}
3GPP, ``3{G} home {Node B} study item,'' TS23.009, version 11.0.0 release 11,
  Tech. Rep., Sep. 2011.

\bibitem{TR:3GPP-TS36.423}
------, ``Evolved universal terrestrial radio access network {X2} application
  protocol {(X2AP)},'' TS36.423, version 11.2.0 release 11, Tech. Rep., Sep.
  2012.

\bibitem{ART:OMIYI-TWC07}
P.~Omiyi, H.~Haas, and G.~Auer, ``Analysis of {TDD} cellular interference
  mitigation using busy-bursts,'' \emph{{IEEE} Trans. Wireless Commun.},
  vol.~6, no.~7, pp. 2721--2731, Jul. 2007.

\bibitem{TR:3GPP-R4-093196}
G.~T.~R. WG4, ``Interference mitigation for {HeNBs} by channel measurements,''
  Institute for information industry and Colier coorporation, TSG-RAN WG4
  \#52bis, TR, Aug. 2009.

\bibitem{BOOK:SESIA-WILEY09}
S.~Sesia, I.~Toufik, and M.~Baker, Eds., \emph{LTE, The UMTS Long Term
  Evolution: From Theory to Practice}.\hskip 1em plus 0.5em minus 0.4em\relax
  Wiley, 2009.

\bibitem{ART:BHARUCHA-EURASIP10}
Z.~Bharucha, A.~Saul, G.~Auer, and H.~Haas, ``Dynamic resource partitioning for
  downlink femto-to-macro-cell interference avoidance,'' \emph{EURASIP Journal
  on Wireless Communications and Networking}, vol. 2010, May 2010.

\bibitem{TR:3GPP-RP111117}
3GPP, ``Coordinated multi-point operation for lte,'' Samsung, TSG-RAN WG1
  \#53bis, Work Item Description, Sep. 2011.

\bibitem{ART:ZHANG-JSAC10}
J.~Zhang and J.~G. Andrews, ``Adaptive spatial intercell interference
  cancellation in multicell wireless networks,'' \emph{{IEEE} J. Sel. Areas
  Commun.}, vol.~28, no.~9, pp. 1455--1468, Dec. 2010.

\bibitem{ART:ZHANG-TWC08}
------, ``Distributed antenna systems with randomness,'' \emph{{IEEE} Trans.
  Wireless Commun.}, vol.~8, no.~9, pp. 3636--3646, Sep. 2008.

\bibitem{ART:LIMA-TWC12A}
C.~H.~M. de~Lima, M.~Bennis, and M.~Latva-aho, ``Coordination mechanisms for
  self-organizing femtocells in two-tier coexistence scenarios,'' \emph{{IEEE}
  Trans. Wireless Commun.}, vol.~11, no.~6, Jun. 2012.

\bibitem{BOOK:DALE-WILEY79}
M.~Dale, \emph{The Algebra of Random Variable}.\hskip 1em plus 0.5em minus
  0.4em\relax Wiley \& Sons Inc, 1979.

\bibitem{ART:BELL-TAM34}
E.~T. Bell, ``Exponential polynomials,'' \emph{The Annals of Mathematics},
  vol.~35, no.~2, pp. 258--277, Apr. 1934.

\bibitem{BOOK:KENDALL-GRIFFIN45}
M.~G. Kendall, \emph{The Advanced Theory of Statistics}, 2nd~ed.\hskip 1em plus
  0.5em minus 0.4em\relax Charles Griffin \& Company Limited, 1945, vol.~1.

\bibitem{ART:WEBER-TC10}
S.~Weber, J.~Andrews, and N.~Jindal, ``An overview of the transmission capacity
  of wireless networks,'' \emph{{IEEE} Trans. Commun.}, vol.~58, no.~12, pp.
  3593--3604, Dec. 2010.

\bibitem{ART:LEE-TVT90}
W.~C.~Y. Lee, ``Estimate of channel capacity in rayleigh fading environment,''
  \emph{{IEEE} Trans. Veh. Technol.}, vol.~39, no.~3, pp. 187--189, Aug. 1990.

\bibitem{ART:ALOUINI-TVT99}
M.-S. Alouini and A.~J. Goldsmith, ``Area spectral efficiency of cellular
  mobile radio systems,'' \emph{{IEEE} Trans. Wireless Commun.}, vol.~48,
  no.~4, pp. 1047--1066, Jul. 1999.

\end{thebibliography}

\end{document}